\newtheorem{prop}{Proposition}
\newtheorem{thm}{Theorem}
\newtheorem{cor}{Corollary}
\newtheorem{definition}{Definition}
\begin{document}


\begin{frontmatter}

\title{Algorithmic Complexity of Secure Connected Domination in Graphs}
\author{Jakkepalli Pavan Kumar}
\ead{jp.nitw@gmail.com}
\author{P. Venkata Subba Reddy}
\address{Department of Computer Science and Engineering\\
	National Institute of Technology Warangal, Warangal, Telangana, India}
\ead{pvsr@nitw.ac.in}
\author{S. Arumugam}
\address{Director, \textit{n}-CARDMATH, \\
	Kalasalingam Academy of Research and Education\\
	Anand Nagar, Krishnankoil, Tamilnadu, India.}
\ead{s.arumugam.klu@gmail.com}
\begin{abstract}
Let $G = (V,E)$ be a simple, undirected and connected graph. A connected (total) dominating set $S \subseteq V$ is a secure connected (total) dominating set of $G$, if for each $ u \in V \setminus S$, there exists $v \in S$ such that $uv \in E$ and $(S \setminus \lbrace v \rbrace) \cup \lbrace u \rbrace $ is a connected (total) dominating set of $G$. The minimum cardinality of a secure connected (total) dominating set of $G$ denoted by $ \gamma_{sc} (G) (\gamma_{st}(G))$, is called the secure connected (total) domination number of $G$.  In this paper, we show that the decision problems corresponding to secure connected domination number and secure total domination number are NP-complete even when restricted to split graphs or bipartite graphs. The NP-complete reductions also show that these problems are w[2]-hard. We also prove that the secure connected domination problem is linear time solvable in block graphs and threshold graphs.

\end{abstract}

\begin{keyword}
Domination, Secure domination, Secure connected domination, W[2]-hard.
\MSC[2010] 05C69\sep 68Q25 
\end{keyword}

\end{frontmatter}

\section{Introduction}\label{intro}
Let $G(V,E)$ be a simple, undirected and connected graph. For graph theoretic terminology we refer to \cite{west}. For a vertex $v \in V$, the \textit{open neighborhood} of $v$ in $G$ is $N_G(v)$= \{$u \in V: uv \in E$\}, the \textit{closed neighborhood} of $v$ is defined as $N_G[v]=N_G(v) \cup \{v\}$. If $S \subseteq V$, then the open neighborhood of $S$ is the set $N_G(S) = \cup_{v \in S} N_G(v)$. The closed neighborhood of $S$ is $N_G[S] = S \cup N_G(S)$. Let $ S \subseteq V$. Then a vertex $ w \in V $ is called a private neighbor of $ v $ with respect to $ S $ if $ N[w] \cap S = \{v\}. $ If further $ w \in V \setminus S$, then $ w $ is called an \textit{external private neighbor (epn)} of $v$.  \par 
A subset $S$ of $V$ is a \textit{dominating set} (DS) in $G$ if for every $u \in V \setminus S$, there exists $v \in S$ such that $uv \in E$. The \textit{domination number} of $G$ is the minimum cardinality of a DS in $G$ and is denoted by $\gamma(G)$. A set $S \subseteq V$ is said to be a \textit{secure dominating set} (SDS) in $G$ if for every $u \in V \setminus S$ there exists $v \in S$ such that $uv \in E$ and $(  S \setminus \{ v \})  \cup \{ u \} $ is a dominating set of $G$.  We say that $ v $ $ S$-\textit{defends} $u $ or $ u $ is defended by $ v $. The minimum cardinality of a SDS in $G$ is called the \textit{secure domination number} of $G$ and is denoted by $\gamma_s(G)$. A dominating set $S$ is said to be a \textit{connected dominating set} (CDS), if the induced subgraph $G[S]$ is connected. A CDS $S$ is said to be a \textit{secure connected dominating set} (SCDS) in $G$ if for each $u \in V \setminus S$, there exists $v \in S$ such that  $uv \in E$ and $(S \setminus \{ v \})  \cup \{ u \} $ is a CDS in $G$. The minimum cardinality of a SCDS in $G$ is called the \textit{secure connected domination number} of $G$ and is denoted by $\gamma_{sc}(G)$. A dominating set $S$ is said to be a \textit{total dominating set} (TDS), if the induced subgraph $G[S]$ has no isolated vertices. A TDS $S$ is said to be a \textit{secure total dominating set} (STDS) of $G$, if for each $u \in V \setminus S$, there exists $v \in S$ such that  $uv \in E$ and $( S \setminus \{ v \})  \cup \{ u \}$ is a TDS in $G$. The minimum cardinality of a STDS in $G$ is called the \textit{secure total domination number} of $G$ and is denoted by $\gamma_{st}(G)$.
We need the following theorems.
\begin{thm}(\cite{scd})\label{com}
	Let $ G $ be a connected graph of order $ n $. Then $ \gamma_{sc}(G)=1 $ if and	only if $ G = K_n $.
\end{thm}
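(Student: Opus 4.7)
The plan is to prove both directions directly from the definition of a secure connected dominating set, specialized to the case where the set has cardinality one.

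For the forward direction, I would assume $\gamma_{sc}(G) = 1$ and let $S = \{v\}$ be a SCDS of $G$. Since $S$ is in particular a dominating set, $v$ must be adjacent to every vertex in $V \setminus \{v\}$. The secure defense condition then forces, for each $u \in V \setminus \{v\}$, that $(S \setminus \{v\}) \cup \{u\} = \{u\}$ is itself a CDS of $G$; applying domination again, $u$ must be adjacent to every vertex in $V \setminus \{u\}$. Since $u$ was arbitrary, every pair of distinct vertices of $G$ is adjacent, so $G = K_n$.

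For the reverse direction, if $G = K_n$ I would pick any vertex $v$ and check that $\{v\}$ satisfies the SCDS definition: the induced subgraph on a single vertex is trivially connected, $v$ dominates all remaining vertices since $G$ is complete, and for every $u \neq v$ the swapped set $\{u\}$ is again a single-vertex CDS of $K_n$ by the same argument. Hence $\gamma_{sc}(G) \le 1$, and since a SCDS is nonempty in a nontrivial connected graph (and the bound $\gamma_{sc}(G) \ge 1$ holds in all cases), we obtain equality.

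There is no real obstacle here; the statement is essentially an unpacking of the SCDS definition at its minimum possible cardinality, and the only mild care needed is to observe that in the degenerate case $n = 1$ the condition ``for each $u \in V \setminus S$'' is vacuous, so $K_1$ is covered by the same argument.
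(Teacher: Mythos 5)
Your argument is correct: both directions follow by unpacking the SCDS definition for a singleton set, and the observation that $(S\setminus\{v\})\cup\{u\}=\{u\}$ being a CDS forces $u$ to be adjacent to all other vertices is exactly the key step. Note that the paper itself gives no proof of this statement --- it is quoted from the cited reference \cite{scd} --- so there is no in-paper argument to compare against; your definitional proof is the natural one and is sound, including the vacuous $n=1$ case.
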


\begin{thm}(\cite{scd})\label{pendant}
	Let $ G $ be a connected graph of order $ n \ge 3 $. Let $ L(G) $ and $ S(G) $ be the set of pendant and support vertices of $ G $ respectively. Let $ X $ be a secure connected dominating set of $ G $. Then
	(i) $ L(G) \subseteq X $ and $ S(G) \subseteq X $\\
	(ii) No vertex in $ L(G) \cup S(G) $ is an X-\textit{defender}.
\end{thm}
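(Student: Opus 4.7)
The plan is to exploit a simple isolation argument enabled by Theorem~\ref{com}. A pendant has a unique neighbor, so any subset that contains a pendant but not its support must leave the pendant isolated in the induced subgraph. Theorem~\ref{com} provides the bound $|X| \geq 2$ for free, since a connected graph with $n \geq 3$ that has a pendant cannot be $K_n$, whence disconnection genuinely follows from isolation.

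For (i), I would proceed in two sub-steps. First, suppose some $\ell \in L(G)$ satisfies $\ell \notin X$, and let $s$ be its unique neighbor. Domination of $\ell$ forces $s \in X$, and the only candidate $X$-defender of $\ell$ is $s$ itself. The swapped set $(X \setminus \{s\}) \cup \{\ell\}$ has $|X| \geq 2$ elements and leaves $\ell$ with no neighbor inside it, hence is disconnected, contradicting the SCDS property. Thus $L(G) \subseteq X$. Second, once $L(G) \subseteq X$, suppose a support $s$ with pendant $\ell$ satisfies $s \notin X$. Then $\ell \in X$ has no neighbor in $X$, so $\ell$ is an isolated vertex of $G[X]$, again contradicting the connectedness of $X$. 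Hence $S(G) \subseteq X$.

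For (ii), I would use (i) as a lemma. A pendant $\ell \in X$ is adjacent only to its support $s$, so the only vertex it could possibly defend is $s$; but $s \in X$ by (i), so there is no external vertex $u \in V \setminus X$ for $\ell$ to defend. For a support $s \in X$ defending some $u \in V \setminus X$ via a neighbor relation and the SCDS swap, note that $u \neq \ell$ (since $\ell \in X$), so the resulting set $(X \setminus \{s\}) \cup \{u\}$ still contains $\ell$ but not $s$, isolating $\ell$ and breaking connectedness, a contradiction.

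The main obstacle is really only the bookkeeping: one has to be careful to invoke Theorem~\ref{com} to secure $|X| \geq 2$ before concluding that an isolated vertex breaks connectedness. Once this is in place, each of the four sub-claims collapses to a one-line isolation argument; no case analysis on additional graph structure or defender choice is needed.
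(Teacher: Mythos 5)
The paper does not prove this theorem at all --- it is imported verbatim from \cite{scd} as a quoted prerequisite --- so there is no in-paper proof to compare against; judged on its own, your argument is correct and complete. The isolation argument works in all four sub-claims: a pendant $\ell$ retained in a set that omits its unique neighbour $s$ is isolated in the induced subgraph, and your appeal to Theorem~\ref{com} (a graph with a pendant and $n\ge 3$ is not complete, so $|X|\ge 2$) legitimately upgrades ``isolated vertex'' to ``disconnected,'' while the observation in (ii) that the defended vertex $u$ lies in $V\setminus X$ and hence cannot be the pendant $\ell\in X$ closes the only potential loophole in the support-vertex case.
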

\begin{prop}(\cite{sc2})\label{p1}
	Let $S$ be a CDS in $G$. Then $S$ is a SCDS in $G$ if and only if the following conditions are satisfied.
	\begin{enumerate}
		\item [(i)] $epn(v, S) = \emptyset$ for all $v \in S$.
		\item[(ii)] For every $u \in V \setminus S$, there exists $v \in S \cap N_G(u)$ such that $V(C) \cap N_G(u) \ne \emptyset$ for every component $C$ of $G[S \setminus \{v\}] $. 
	\end{enumerate}	
\end{prop}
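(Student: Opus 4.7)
The plan is to prove the two directions separately, unpacking the definition of SCDS into its connectivity requirement and its domination requirement for the replaced set $(S \setminus \{v\}) \cup \{u\}$.

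For the forward direction, I assume $S$ is an SCDS. To establish (i), suppose for contradiction that some $v \in S$ has an external private neighbor $w$. Then $N_G[w] \cap S = \{v\}$, so the only candidate in $S \cap N_G(w)$ that can $S$-defend $w$ is $v$ itself. Hence $(S \setminus \{v\}) \cup \{w\}$ must be a CDS; in particular, it must be connected. But inside this set $w$'s only potential $S$-neighbor was $v$, which has been removed, so $w$ is isolated in $G[(S \setminus \{v\}) \cup \{w\}]$ (for $|S| \ge 2$; the case $|S|=1$ is handled by Theorem \ref{com}), contradicting connectivity. For (ii), fix any $u \in V \setminus S$ and let $v \in S \cap N_G(u)$ be a defender. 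Then $G[(S \setminus \{v\}) \cup \{u\}]$ is connected; since the only edges from $u$ into this set go to vertices of $S \setminus \{v\}$, the vertex $u$ must have a neighbor in each connected component of $G[S \setminus \{v\}]$.

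For the converse, assume (i) and (ii) hold, fix $u \in V \setminus S$, and let $v$ be the witness given by (ii). I need to verify that $T := (S \setminus \{v\}) \cup \{u\}$ is a CDS. Connectivity follows immediately from (ii): $u$ is adjacent to at least one vertex of every component of $G[S \setminus \{v\}]$, so adding $u$ merges these components into a single connected subgraph. For domination, the only vertices to check are those outside $T$, namely $v$ and the vertices of $V \setminus S$ other than $u$. The vertex $v$ is dominated by $u \in T$. For any $x \in V \setminus S$ with $x \ne u$, since $S$ is a DS there is some $w \in S \cap N_G(x)$; if $w \ne v$ then $w \in T$ and we are done, while if $v$ were the \emph{only} such neighbor, $x$ would be an epn of $v$, contradicting (i). Hence $x$ has a neighbor in $S \setminus \{v\} \subseteq T$.

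The main obstacle is not any single calculation but keeping the bookkeeping clean: one must be careful to distinguish the roles of the defender $v$, the defended vertex $u$, and the other vertices of $V \setminus S$, and to separate the connectivity obligation from the domination obligation for the replaced set. Condition (i) is exactly what blocks a would-be failure of domination after swapping $v$ out, while condition (ii) is precisely the combinatorial restatement of connectivity of $G[(S \setminus \{v\}) \cup \{u\}]$; once this correspondence is made explicit the proof essentially writes itself.
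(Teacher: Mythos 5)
The paper does not prove this proposition; it is quoted from \cite{sc2}, so there is nothing internal to compare against. Your argument is correct and is the natural one: condition (ii) is precisely the connectivity of $G[(S\setminus\{v\})\cup\{u\}]$ viewed through the components of $G[S\setminus\{v\}]$, and condition (i) is precisely what keeps $(S\setminus\{v\})\cup\{u\}$ dominating after the swap; both directions are verified cleanly, including the observation that a vertex $x\ne u$ outside $S$ whose only $S$-neighbour is $v$ would be an epn of $v$. The one point to tighten is the parenthetical claim that the case $|S|=1$ is ``handled by Theorem \ref{com}'': if $S=\{v\}$ is an SCDS then every vertex is universal, so $G=K_n$, and there every $w\ne v$ satisfies $N[w]\cap S=\{v\}$, so $epn(v,S)\ne\emptyset$ even though $S$ is an SCDS --- i.e.\ the forward direction genuinely fails in that degenerate case, and the proposition should be read with $G$ non-complete (equivalently $|S|\ge 2$), as the paper does explicitly for Proposition \ref{p2}; Theorem \ref{com} identifies the exception rather than disposing of it.
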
 
\begin{prop}(\cite{scd})\label{p2}
	Let $G$ be a non-complete connected graph and let $S$ be a secure connected dominating set in $G$. Then the set $S \setminus \{v\}$ is a dominating set for
	every $v \in S$. In particular, $1 +\gamma(G) \le \gamma_{sc}(G)$.
\end{prop}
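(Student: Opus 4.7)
The plan is to fix an arbitrary $v \in S$ and show directly from the definitions that every vertex of $V$ lies in the closed neighborhood of $S \setminus \{v\}$, i.e., $S \setminus \{v\}$ is a dominating set of $G$. Before starting the case analysis, I would note that because $G$ is non-complete and connected, Theorem~\ref{com} forces $|S| \ge \gamma_{sc}(G) \ge 2$; consequently $S \setminus \{v\}$ is nonempty, and since $G[S]$ is connected on at least two vertices, $v$ itself has a neighbor inside $S \setminus \{v\}$. So $v$ is dominated by $S \setminus \{v\}$.

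It then remains to handle an arbitrary $u \in V \setminus S$. Since $S$ is an SCDS, we may pick a defender $w \in S \cap N_G(u)$ such that $(S \setminus \{w\}) \cup \{u\}$ is a CDS. If $w \ne v$, then $w$ already belongs to $S \setminus \{v\}$ and is adjacent to $u$, giving the desired domination. The one step that requires a little more than bookkeeping is the case $w = v$: here the SCDS hypothesis tells us that $G[(S \setminus \{v\}) \cup \{u\}]$ is \emph{connected} and contains at least two vertices (since $|S| \ge 2$), so $u$ cannot be isolated in this induced subgraph and must therefore have at least one neighbor in $S \setminus \{v\}$. This is the only place where the connectedness clause, and not just the domination clause, of the SCDS definition is used; it is where I expect a reader to pause, so I would spell it out.

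Combining the two cases, $S \setminus \{v\}$ dominates every vertex of $V$, so $S \setminus \{v\}$ is a dominating set for each choice of $v \in S$. Applying this to a minimum SCDS $S$ with $|S| = \gamma_{sc}(G)$ yields a dominating set of size $\gamma_{sc}(G) - 1$, whence $\gamma(G) \le \gamma_{sc}(G) - 1$, i.e., $1 + \gamma(G) \le \gamma_{sc}(G)$, which is the ``in particular'' statement.
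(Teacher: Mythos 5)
Your argument is correct and complete: the case split on whether the defender $w$ of $u$ equals $v$, together with the observation that connectedness of $G[(S\setminus\{v\})\cup\{u\}]$ on at least two vertices forces $u$ to have a neighbour in $S\setminus\{v\}$, is exactly the point that needs to be made, and your preliminary use of Theorem~\ref{com} to guarantee $|S|\ge 2$ (so that both $v$ and $u$ have something left to attach to) is the right way to rule out degeneracies. Note that the paper itself states this proposition as a quoted result from \cite{scd} and gives no proof, so there is nothing to compare against; your derivation stands on its own as a valid proof from the definitions.
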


\section{Main Results}
\noindent We first determine the value of $ \gamma_{sc}(G) $ for two families of graphs.

				\begin{thm}
				Let $ W_n = v_1+C_n$ be the wheel of order $ n+1 $ where $ n \ge 3 $. Let $ G $ be the graph obtained from $ W_{n+1} $ by subdividing all the edges of $ C_n $. Then $ \gamma_{sc}(G)=n+1.$
				\end{thm}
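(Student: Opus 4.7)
The plan is to show matching bounds $\gamma_{sc}(G)\le n+1$ and $\gamma_{sc}(G)\ge n+1$. Label the hub by $v_1$, the original rim vertices $r_1,\dots,r_n$ (with $r_ir_{i+1}\in E(C_n)$, indices mod $n$), and the vertex subdividing $r_ir_{i+1}$ by $s_i$; then $G$ has $2n+1$ vertices, $G-v_1$ is the $2n$-cycle $C\colon r_1 s_1 r_2 s_2\cdots r_n s_n$, each $s_i$ has degree $2$, and each $r_i$ has degree $3$. For the upper bound I would take $S=\{v_1,r_1,\dots,r_n\}$, of size $n+1$, and verify it is an SCDS via Proposition \ref{p1}: $G[S]$ is a star centred at $v_1$ and each $s_i$ is adjacent to $r_i,r_{i+1}\in S$, so $S$ is a CDS; the only vertices outside $S$ are the $s_i$, each with two $S$-neighbours, giving (i); and for (ii), given $s_i\notin S$, choosing $v=r_i$ leaves the star on $v_1$ and the remaining $r_j$, a single component containing the neighbour $r_{i+1}$ of $s_i$.

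For the lower bound, let $S$ be an arbitrary SCDS and split on whether $v_1\in S$. Suppose first that $v_1\in S$, and call a vertex \emph{red} if it lies in $S$ and \emph{blue} otherwise. I would show that no two consecutive vertices of $C$ are both blue: if, say, $r_i$ and $s_i$ were both blue, then $s_i\notin S$ would force $r_{i+1}\in S$ for domination, and then $N[s_i]\cap S=\{r_{i+1}\}$ would make $s_i$ an external private neighbour of $r_{i+1}$, contradicting Proposition \ref{p1}(i); the pair $s_i,r_{i+1}$ is symmetric. Hence the blue vertices form an independent set in the $2n$-cycle $C$, so at most $n$ of them are blue, at least $n$ are red, and $|S|\ge 1+n=n+1$.

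Now suppose $v_1\notin S$. Since $G-v_1=C$, connectedness of $G[S]$ forces $S$ to be a contiguous arc of $C$. Assume $|S|\le n$; the complementary arc then has length $m\ge n\ge 3$. If $m\ge 4$, the interior of the complement arc contains two consecutive vertices of $C$, necessarily one rim and one subdivision, so it contains some $s_j$ both of whose rim neighbours $r_j,r_{j+1}$ lie in the complement, leaving $s_j$ undominated. The only remaining possibility is $n=3$ and $m=3$, where a direct check of the two shapes $\{r_i,s_i,r_{i+1}\}$ and $\{s_{i-1},r_i,s_i\}$ shows each fails to dominate some vertex of $G$ (the opposite rim vertex $r_{i+2}$ and the opposite subdivision $s_{i+1}$, respectively). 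Thus $|S|\ge n+1$ in every case.

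I expect the main obstacle to be the colouring step in the case $v_1\in S$: it is not a priori clear that Proposition \ref{p1}(i) together with the domination requirement should force alternation on the $2n$-cycle. The key insight is that a blue subdivision forced to have a blue rim neighbour has only one possible dominator on $C$, which then inherits the subdivision as an external private neighbour; everything else is a short count on the $2n$-cycle.
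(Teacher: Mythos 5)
Your proof is correct, and it shares the paper's skeleton --- the same optimal set (hub plus all $n$ rim vertices) for the upper bound, verified here carefully against Proposition~\ref{p1}, and the same top-level case split on whether the hub lies in the solution --- but the decisive step of the lower bound runs on a genuinely different mechanism. In the main case (hub in $S$) the paper argues via connectivity and defender considerations (each subdivision vertex in $D$ drags a rim neighbour into $D$, and if no subdivision vertex is in $D$ then all rim vertices are), and its final count is left quite terse; you instead extract from Proposition~\ref{p1}(i) that the vertices omitted from $S$ form an independent set of the $2n$-cycle $G-v_1$ (a blue subdivision vertex with a blue rim neighbour would become an external private neighbour of its unique dominator), which caps the omitted cycle vertices at $n$ and closes the count in one line. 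In the hub-absent case the paper passes to a same-size connected dominating set avoiding the hub and invokes the bound $2n-2$, which for $n=3$ equals $n+1$ and so is not literally a contradiction there; you argue directly that a short arc leaves an undominated subdivision vertex in the interior of the complementary arc, with the exceptional $n=3$, $m=3$ shapes checked by hand. Your version is longer but fully self-contained and tighter at the edge cases; the paper's is more economical but leans on assertions it does not fully justify.
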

				\begin{proof}
					Let $V(G)=\{v_1,v_2,\ldots,v_{2n+1}\}$, $ d(v_{2n+1})=n $, $ d(v_i)=\begin{cases}
					2 & 
					\text{if } i \text{ is even}  \\
					3 & \text{otherwise}  
					\end{cases}
					 $ and \\$ N(v_i)=\{v_{i-1}, v_{i+1}\} $ if $ i $ is even. Then $ S=\{v_i : i$ is odd$\} $ is a SCDS of $ G.$ 
					Hence $\gamma_{sc}( G ) \le n+1$. \par
					Now let $ D $ be any $ \gamma_{sc}$-set of $ G $. If $ v_{2n+1} \notin D$ or if $ v_{2n+1} \in D$ and defends a vertex $ v_i $, then we get a connected dominating set $ D_1 $ of $ G $ such that $ \vert D_1 \vert = \vert D \vert$ and $ v_{2n+1} \notin D_1$. Hence $ \vert D\vert= \vert D_1 \vert \ge 2n-2$, which is a contradiction. Thus $ v_{2n+1} \in D $ and $ v_{2n+1}$ does not defend any other vertex. Now let $ v_i \in D$ for some $ i$ where $ i $ is even. Since $ G[D] $ is connected, one of $ v_{i-1} $ or $ v_{i+1} $ is in $ D.$ Also if $ v_i \notin D $ for all even $ i,$ then $ v_{i} \in D$ for all odd $ i.$ Hence $ \gamma_{sc}(G)= \vert D \vert \ge n+1.$
				\end{proof}	
				\begin{thm}
					For the Book graph $ B_n=K_{1,n} \Box K_2 $, we have $\gamma_{sc}( B_{n} )$ = $n+2$.
						\end{thm}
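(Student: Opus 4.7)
\medskip

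\noindent\textbf{Proof plan.} Let me label the vertices of $B_n=K_{1,n}\Box K_2$ as follows: write $c_0,c_1$ for the two copies of the star's centre (so $c_0c_1$ is an edge), and write $a_i,b_i$ (for $i=1,\dots,n$) for the two copies of the $i$-th leaf, where $a_ic_0,\; b_ic_1,\; a_ib_i\in E(B_n)$. Thus $\deg(a_i)=\deg(b_i)=2$ and $N(a_i)=\{c_0,b_i\}$, $N(b_i)=\{c_1,a_i\}$. The graph consists of $n$ four-cycles sharing the common edge $c_0c_1$.

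\medskip

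\noindent For the upper bound, I will simply verify that $S=\{c_0,c_1,a_1,\dots,a_n\}$ is an SCDS: it induces a connected subgraph (a star at $c_0$ plus the edge $c_0c_1$), it clearly dominates every $b_i$ through $a_i$, and each $b_i\in V\setminus S$ is $S$-defended by $a_i$, since $(S\setminus\{a_i\})\cup\{b_i\}=\{c_0,c_1,a_1,\dots,a_{i-1},a_{i+1},\dots,a_n,b_i\}$ is still connected (via $c_1b_i$) and dominating. Hence $\gamma_{sc}(B_n)\le n+2$.

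\medskip

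\noindent For the lower bound, let $S$ be any SCDS of $B_n$. I will split into cases according to which of $c_0,c_1$ lie in $S$.

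\emph{Case A: both $c_0,c_1\in S$.} Apply Proposition~\ref{p2}: $S\setminus\{c_0\}$ must dominate $B_n$, and in particular every $a_j$. But $a_j$'s only neighbours are $c_0$ and $b_j$, so once $c_0$ is removed we need $a_j\in S$ or $b_j\in S$ for every $j$. This contributes at least $n$ further vertices to $S$, giving $|S|\ge n+2$.

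\emph{Case B: exactly one of $c_0,c_1$ lies in $S$, say $c_0\in S$ and $c_1\notin S$.} Since $S$ dominates every $b_j$ and $N(b_j)=\{c_1,a_j\}$, we must have $a_j\in S$ or $b_j\in S$ for each $j$. Moreover $b_j$'s only remaining neighbour in $G$ is $a_j$, so any $b_j\in S$ is isolated in $G[S]$ unless $a_j\in S$; by connectedness of $G[S]$ this forces $a_j\in S$ for every $j$. Hence $\{c_0,a_1,\dots,a_n\}\subseteq S$, so $|S|\ge n+1$. To rule out equality, suppose $S=\{c_0,a_1,\dots,a_n\}$; then $c_1\in V\setminus S$ has the unique $S$-neighbour $c_0$, and $(S\setminus\{c_0\})\cup\{c_1\}=\{c_1,a_1,\dots,a_n\}$ is \emph{disconnected} because $c_1$ has no neighbour among the $a_i$'s. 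So this set is not a CDS, contradicting the secure property. Thus $|S|\ge n+2$ in this case as well.

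\emph{Case C: $c_0,c_1\notin S$.} Domination of $c_0$ forces some $a_j\in S$ and domination of $c_1$ forces some $b_k\in S$; connectedness of $G[S]$ (whose only edges among $\{a_i,b_i\}_i$ are the independent edges $a_ib_i$) forces $j=k$ and in fact $S\subseteq\{a_j,b_j\}$, but then for $n\ge 2$ the pair $a_i,b_i$ with $i\ne j$ is not dominated. So Case~C cannot occur for $n\ge 2$, while for $n=1$ the graph is $C_4$ and the bound is attained directly.

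\medskip

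\noindent The main obstacle is the lower bound in Case~B, where the sheer counting only gives $n+1$; the secure-defence argument for $c_1$ is what pushes it up to $n+2$. Everything else is a direct application of domination/connectedness together with Proposition~\ref{p2}.
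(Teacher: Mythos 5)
Your proof is correct and follows essentially the same route as the paper: the identical witness set (one full copy of $K_{1,n}$ plus the other centre) for the upper bound, and a lower bound obtained by forcing both centres into $S$ and then at least one vertex from each rung $\{a_i,b_i\}$. The only cosmetic differences are that you invoke Proposition~\ref{p2} for the per-rung count where the paper argues directly that the swapped set $G[(D\setminus\{c_0\})\cup\{a_i\}]$ would be disconnected, and that you treat the ``no centre in $S$'' case explicitly, which the paper dismisses in one line via connectedness.
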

						\begin{proof}
							Let $ S_1 $ and $ S_2 $ be the two copies of $ K_{1,n} $ in $ B_n.$ Let $ V(S_1)=\{v_1,v_2,\ldots,v_{n+1}\} $ and $ V(S_2)=\{w_1,w_2,\ldots,w_{n+1}\} $. Let $ v_1 $ and $ w_1 $ be the central vertices of $ S_1, S_2 $ respectively. Let $ v_iw_i \in E(B_n) $. Clearly $ V(S_1) \cup \{w_1\} $ is an SCDS of $ B_n.$ Hence $ \gamma_{sc}(B_n) \le n+2.$\par
							Now let $ D $ be any $ \gamma_{sc}$-set of $ B_n.$ Since $ D $ is connected, either $ v_1 $ or $ w_1 $ is in $ D.$ If $ w_1 \in D$ and $ v_1 \notin D $, then $ \{w_2,w_3,\ldots,w_{n+1},v_2,v_3,\ldots,v_{n+1}\} \subseteq D.$ Thus, $ \vert D \vert \ge 2n+1 $ which is a contradiction. Hence $ v_1,w_1 \in D.$ Now if both $ w_i $ and $ v_i $ are not in $ D $ for some $ i\ge 2, $ then $G[(D\setminus\{w_1\}) \cup \{w_i\}] $ and $ G(D\setminus\{v_1\}) \cup \{v_i\}] $ are disconnected. Hence $ \vert D \cap \{w_i,v_i\} \vert \ge 1 $ for any $ i \ge 2 $ and $ \gamma_{sc}( B_{n} )=\vert D \vert \ge n+2.$ Thus, $\gamma_{sc}( B_{n} )$ = $n+2$.
						\end{proof}

\begin{thm}
	Let $ G=P_n \Box P_2 $ where $ n \ge 3.$ Then $ \gamma_{sc}(G)=n+\lceil\frac{n}{3} \rceil.$
\end{thm}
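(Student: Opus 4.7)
The plan is to prove matching upper and lower bounds. For the upper bound, let $I\subseteq\{1,\ldots,n\}$ be a minimum dominating set of the path $P_n$, so $|I|=\lceil n/3\rceil$, and take $S=\{v_1,\ldots,v_n\}\cup\{w_i:i\in I\}$. Then $S$ is a CDS, since the top row induces a path and each chosen $w_i$ hangs off $v_i$. For every $w_j\notin S$, because $I$ dominates $j$ in $P_n$ some $w_{j\pm 1}$ lies in $S$; I will use that vertex as the $S$-defender of $w_j$, since in the swap-set all remaining bottom vertices are still tethered to their top neighbours (so it is connected) and the ejected $w_{j\pm 1}$ is dominated by $v_{j\pm 1}\in S$. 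This gives $\gamma_{sc}(G)\le n+\lceil n/3\rceil$.

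For the lower bound, let $D$ be an arbitrary SCDS and classify column $i$ as $F$ (both $v_i,w_i\in D$), $T$ (only $v_i\in D$), $B$ (only $w_i\in D$), or $E$ (neither). Step one rules out $E$-columns: since $G$ has no edges skipping a column, connectedness of $G[D]$ forces the set $\pi(D)$ of non-empty columns to be an interval $[a,b]$; if $a\ge 2$, then $v_1,w_1\notin D$ forces $v_2,w_2\in D$ for domination, and the only defender candidate for $v_1$ is $v_2$, whose removal leaves $v_1$ isolated in the swap-set, a contradiction. Symmetry gives $b=n$, so $\pi(D)=[1,n]$ and $|D|=n+f$, where $f$ is the number of $F$-columns. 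Step two uses that an edge in $G[D]$ between adjacent columns exists only when the two columns share a row-selection, so inside any maximal non-$F$ run the columns are either all $T$ or all $B$.

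The critical step three bounds the length of such monochromatic runs by a defender analysis at each non-selected vertex in the run. In an interior run of length $\ge 3$ of, say, type $T$, the middle column $c$ has $w_{c-1},w_{c+1}\notin D$, so the only defender candidate for $w_c$ is $v_c$; but removing $v_c$ isolates $w_c$ in the swap-set. A parallel analysis, which also tracks the case where a candidate defender $w_{c\pm 1}$ sits in a one-selection column whose removal would empty that column and sever $D$, shows that every boundary run (adjoining column $1$ or $n$) has length at most $1$ and every interior run has length at most $2$. The main obstacle is precisely this enumeration: the three defender candidates $v_c,w_{c-1},w_{c+1}$ each need to be ruled out by checking that the swap violates either connectedness or domination, and the verification branches on the types of the adjacent columns.

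To finish, the $n$ columns decompose into the $f$ $F$-columns together with at most $f+1$ non-$F$ runs of combined length at most $2(f-1)+1+1=2f$, so $n\le 3f$, giving $f\ge\lceil n/3\rceil$ and $|D|=n+f\ge n+\lceil n/3\rceil$, as required.
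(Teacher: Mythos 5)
Your proof is correct, and while the upper bound matches the paper's (the paper uses the specific set $\{w_i : i\equiv 2 \pmod 3\}$ where you use an arbitrary minimum dominating set of $P_n$ --- yours is actually the safer choice, since the paper's set fails to dominate when $n\not\equiv 0\pmod 3$), your lower bound takes a genuinely different route. The paper proceeds by normalization: after showing every column meets $D$, it argues that $D\cap V_1$ and $D\cap V_2$ each dominate their respective paths, and then iteratively exchanges bottom vertices for top ones to produce an equal-sized SCDS $D_1$ with $D_1\cap V_1=V_1$, concluding $|D|\ge n+\gamma(P_n)$. You instead count directly: every column is non-empty, so $|D|=n+f$ with $f$ the number of doubly-occupied columns, and the connectivity constraint (adjacent columns must share a row-selection) plus the defender analysis at the middle of a long run force the single-occupancy runs to be monochromatic, of length at most $1$ at the boundary and at most $2$ in the interior, whence $n\le 3f$. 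Your approach buys a self-contained argument that avoids the paper's iterated-exchange step, which requires re-verifying the SCDS property after each swap and, as written, does not explicitly treat the case of a single missing top vertex; the paper's approach buys a cleaner reduction to the known value $\gamma(P_n)=\lceil n/3\rceil$. One presentational remark: your step three is stated somewhat informally (``a parallel analysis shows\ldots''), but the two contradictions you actually need --- the forced defender $v_c$ of $w_c$ in the middle of a length-$3$ interior run leaves $w_c$ isolated, and likewise the forced defender $v_1$ of $w_1$ in a boundary run of length $2$ --- are both correctly identified, so the gap is expository rather than mathematical.
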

\begin{proof}
	Let $ P=(v_1,v_2,\ldots,v_n) $ and $ Q=(w_1,w_2,\ldots,w_n) $ be two copies of $ P_n $ in $ G$ such that $ v_iw_i \in E(G).$ Let $ V_1=\{v_1,v_2,\ldots,v_n\}$ and $ V_2=\{w_1,w_2,\ldots,w_n\}.$ Then $ S=V_1 \cup \{w_i : i \equiv 2 (mod 3)\} $ is a SCDS of $ G $. Hence $ \gamma_{sc}(G) \le n+\lceil\frac{n}{3} \rceil.$ \par
	Let $ D $ be any $ \gamma_{sc}$-set of $ G.$ If $ v_i,w_i \notin D $ for some $ i,$ where $ 2 \le i \le n-1 $, then $ G[D] $ is disconnected, which is a contradiction. Hence at least one of $ v_i, w_i $ is in $ D$, where $ 2 \le i \le n-1.$ If both $ v_1 $ and $ w_1 $ are not in $ D $, then $ G[(D \setminus \{v_2\}) \cup \{v_1\}] $ and $ G[(D \setminus \{w_2\}) \cup \{w_1\}] $ are disconnected, which is a contradiction. Hence $ v_1 $ or $ w_1 $ is in $ D.$ Similarly, $ w_n $ or $ v_n $ is in $ D.$ We now claim that $ D \cap V_1 $ is a dominating set of $ P.$ Suppose there exists a vertex $ v_i $ such that $ v_i $ is not dominated by $ D \cap V_1.$ Then $ w_i \in D$ and $ G[(D \setminus\{w_i\}) \cup \{v_i\}] $ is disconnected, which is a contradiction. Hence $ D \cap V_1 $ is a dominating set of $ P.$ Similarly $ D \cap V_2 $ is a dominating set of $ Q.$ Now suppose $ D \cap V_1 \subsetneq V_1$ and $ D \cap V_2 \subsetneq V_2$. If three consecutive vertices of $ P $ say, $ v_i, v_{i+1}, v_{i+2} $ are not in $ D $, then $ w_i, w_{i+1}, w_{i+2} \in D.$ However, $ G[(D\setminus\{w_{i+1}\}) \cup \{v_{i+1}\}] $ is disconnected, which is a contradiction. Now suppose $ v_i, v_{i+1} \notin D$. Then $ v_{i-1}, v_{i+2}, w_{i},w_{i+1} \in D.$ Now since $ G[D]$ is connected, it follows that $ w_{i-1}, w_{i+2} \in D.$ Hence $ (D \setminus \{w_i,w_{i+1}\}) \cup \{v_i,v_{i+1}\} $ is also a SCDS of $ G.$ Thus by repeating the above process we get a SCDS of $ G,$ $ D_1 $ such that $ \vert D_1 \vert = \vert D \vert$, $ D_1 \cap V_1=V_1$ and $ D_1 \cap V_2 $ is a dominating set of $ Q.$ Thus, $\vert D \vert = \vert D_1 \vert \ge n+ \lceil \frac{n}{3} \rceil.$ Therefore, $ \gamma_{sc}(G)=n+\lceil\frac{n}{3} \rceil.$
\end{proof}
\noindent We now proceed to present results on algorithmic aspects such as NP-comppleteness and linear time algorithm for some classes of graphs.\\
\textit{Secure Connected Domination Problem} (\textit{SCDM}) \\
\indent \textbf{Instance}: A connected graph $G$ and a positive integer $l$.\\
\indent \textbf{Question}: Does there exist a SCDS of size at most $ l $ in $ G $ ? \bigskip  \\ 
\noindent The proof is by reduction from the Domination problem (DM), which is NP-complete \cite{garey}. \smallskip \\[3pt]
\textit{Domination Problem} (\textit{DM})\\
\indent \textbf{Instance}: A graph $G$ and a positive integer $k$.\\
\indent \textbf{Question}: Does there exist a DS of size at most $ k $ in $ G $ ?  
 
		\begin{thm}\label{scdmnp}
			SCDM is NP-complete.
		\end{thm}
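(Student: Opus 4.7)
The plan is first to verify membership in NP, then to reduce the Domination problem (DM) to SCDM. For NP membership, a candidate $S$ with $|S|\le l$ is checked by testing whether $G[S]$ is connected and $S$ dominates $V$, and then, for each $u\in V\setminus S$, scanning the neighbors $v\in S\cap N(u)$ to see if $(S\setminus\{v\})\cup\{u\}$ remains a CDS; every test runs in polynomial time.

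For NP-hardness, given an instance $(G,k)$ of DM with $V(G)=\{v_1,\ldots,v_n\}$, I would build $G'$ by adding a new vertex $x$ adjacent to every vertex of $V(G)$ together with a single new pendant vertex $y$ whose only neighbor is $x$, and set $l=k+2$. In the forward direction, starting from a dominating set $D$ of $G$ with $|D|\le k$, I would show that $S=D\cup\{x,y\}$ is an SCDS of $G'$: the set is connected through the hub $x$, it dominates $V(G')$ since $x$ is universal on $V(G)$ and $y\in S$, and for each $u\in V(G)\setminus D$ any $v\in D\cap N_G(u)$, which exists because $D$ dominates $G$, serves as an $S$-defender because $(S\setminus\{v\})\cup\{u\}$ stays connected via $x$ and continues to dominate $V(G')$.

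For the reverse direction, let $S$ be any SCDS of $G'$. Since $y$ is a pendant and $x$ is its support, Theorem~\ref{pendant}(i) forces $\{x,y\}\subseteq S$, and Theorem~\ref{pendant}(ii) forbids either $x$ or $y$ from being an $S$-defender. Hence every $u\in V(G)\setminus S$ must be defended by some $v\in D:=S\cap V(G)$, and since the only $G'$-neighbor of $u$ outside $V(G)$ is $x$, this $v$ lies in $D\cap N_G(u)$. It follows that $D$ is a dominating set of $G$ of size $|S|-2\le k$, which closes the reduction.

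The main obstacle is to stop the universal vertex $x$ from acting as a cheap defender, which would otherwise allow an SCDS of $G'$ to ignore the edges of $G$ altogether and thus decouple from the original dominating set structure. The pendant $y$ is the mechanism that blocks this: removing $x$ from any candidate set isolates $y$, so $x$ cannot $S$-defend anything, and Theorem~\ref{pendant} packages this into the clean structural constraint that drives the backward direction. Proposition~\ref{p2} can be invoked as a sanity check on the size bound $l=k+2$, but the core work is managing which vertices are permitted to defend.
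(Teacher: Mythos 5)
Your proof is correct, but it uses a different gadget from the paper's. The paper adds only a single universal vertex $x$ and sets $l=k+1$; its backward direction then splits into cases: if $x\in S^*$ it invokes the external-private-neighbor condition of Proposition~\ref{p1} to conclude that $S^*\setminus\{x\}$ already dominates $V$, and if $x\notin S^*$ it invokes Proposition~\ref{p2}. You instead attach a pendant $y$ to the universal vertex and set $l=k+2$, so that Theorem~\ref{pendant} forces $x,y\in S$ and forbids either from defending, which yields a single-case, cleaner backward direction. Your construction is in fact exactly the one the paper uses later for the split-graph version (Theorem~\ref{scdomsplit}), so it is certainly sound; what it buys is the avoidance of the case analysis and of the epn machinery, at the cost of one extra vertex and a parameter of $k+2$ rather than $k+1$. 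Two small points to tidy up: note that Theorem~\ref{pendant} is stated for graphs of order at least $3$, which your $G'$ satisfies, and in the forward direction you should (as you do) pick the defender $v$ from $D\cap N_G(u)$ rather than $v=x$, since removing $x$ would isolate $y$.
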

		\begin{proof} 
			It can be easily verified that SCDM is in NP. 
			Now let $G=(V,E)$ be a graph and let $ k $ be a positive integer. Let $ G^* $ be the graph with $V(G^*)=V \cup \lbrace x \rbrace$ and $E(G^*) = E \cup \lbrace (u, x)$ $:$ $u \in V \rbrace $ and let $ l=k+1 $. Clearly, $G^*$ can be constructed from $G$ in polynomial time.   \par
			Now if $ D $ is a dominating set of $ G $ with $ \vert D \vert \le k $, then $ S=D \cup \{x\} $ is an SCDS of $ G^* $. Conversely, let $ S^* $ be an SCDS of $ G^* $ with $ \vert S^* \vert \le k+1.$ If $ x \in S $, then it follows from Proposition \ref{p1} that $ epn(x,S)=\emptyset $. Therefore, every vertex $u\in V(G^*)\setminus S$ is adjacent to a vertex in $S\setminus\{x\}$. Hence $S\setminus \{x\}$ is a DS of size at most $k$ in $G$. If $x \notin S$, Proposition \ref{p2}, the set $S \setminus \{v\}$, for any $v \in S$, is a DS of size at most $k$ in $G$. 
		\end{proof}
\noindent Next, we define the decision version of total domination and secure total domination problems as follows.\\ \vspace*{0.1cm} 
\textit{Total Domination Problem} (\textit{TDM}) \\
\indent \textbf{Instance}: A simple, undirected graph $G$ without isolated vertices and a positive integer $r$.\\
\indent \textbf{Question}: Does there exist a TDS of size at most $ r $ in $ G $ ? \\[4pt]
\textit{Secure Total Domination Problem} (\textit{STDM}) \\
\indent \textbf{Instance}: A simple, undirected and connected graph $G$ and a positive integer $m$.\\
\indent \textbf{Question}: Does there exist a STDS of size at most $ m $ in $ G $ ? 
\begin{thm}\label{stdmnp}
	STDM is NP-complete.
\end{thm}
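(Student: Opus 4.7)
The plan is to imitate the reduction from the Domination Problem used in Theorem~\ref{scdmnp}. Membership of STDM in NP is routine: given a candidate $S$, total domination is checkable in polynomial time, and for each $u \in V \setminus S$ one can test all neighbors $v \in S \cap N(u)$ to see whether $(S \setminus \{v\}) \cup \{u\}$ is a TDS. For the hardness direction I would use exactly the same construction: from an instance $(G,k)$ of DM, form $G^*$ by adding a new vertex $x$ adjacent to every vertex of $V$, and set $m = k+1$.

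For the forward direction, given a DS $D$ of $G$ with $|D| \le k$, I would show that $S = D \cup \{x\}$ is an STDS of $G^*$. Total domination is immediate since $x$ is universal and adjacent to every vertex of $V$. For each $u \in V \setminus D$ the set $D$ contains some $v$ adjacent to $u$, and choosing this $v$ as the defender gives $(S \setminus \{v\}) \cup \{u\} = (D \setminus \{v\}) \cup \{x,u\}$, which is again a TDS of $G^*$ because the universal vertex $x$ remains in the swapped set.

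For the converse, let $S^*$ be an STDS of $G^*$ with $|S^*| \le k+1$. I would first establish the STDS analogue of Proposition~\ref{p1}(i): for any STDS $S$ and any $v \in S$, $epn(v,S) = \emptyset$. Indeed, if $u \in V \setminus S$ had $N(u) \cap S = \{v\}$, then $v$ would necessarily be the defender of $u$, and in the swapped set $(S \setminus \{v\}) \cup \{u\}$ the vertex $u$ would have no neighbor at all, contradicting the TDS condition. If $x \in S^*$, this epn-lemma gives that every $u \in V \setminus S^*$ has a neighbor in $S^* \setminus \{x\}$, so $S^* \setminus \{x\}$ is a DS of $G$ of size at most $k$. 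If $x \notin S^*$, then $S^* \subseteq V$, and analyzing the defender of each $u \in V \setminus S^*$ in $G^*$ forces $|N_G(u) \cap S^*| \ge 2$ (the defender $v_u$ plus the second neighbor required for $u$ itself in the swapped set, since $x$ is in neither $S^*$ nor the swapped set); combined with the TDS condition on $S^*$, this shows that $S^* \setminus \{v\}$ is a DS of $G$ for any $v \in S^*$, again of size at most $k$.

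The main obstacle is setting up the two structural facts just used, namely the epn-lemma and its companion that $S^* \setminus \{v\}$ remains a DS when $x \notin S^*$. These play the roles of Propositions~\ref{p1} and \ref{p2} in the SCDM proof but are not quoted directly for the total-domination setting, so I would prove them inline in the course of the argument. Everything else, including the polynomial-time construction, the forward direction, and the cardinality bookkeeping, is routine.
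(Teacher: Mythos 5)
Your proposal is correct and follows essentially the same route as the paper, which simply reuses the universal-vertex reduction from Theorem~\ref{scdmnp} without further comment. You supply the details the paper leaves implicit (the STDS analogue of the epn-condition and the two-neighbor argument when $x \notin S^*$), and both of those inline lemmas check out.
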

\begin{proof}
	 It is clear that STDM is in NP. The reduction given in the proof of Theorem \ref{scdmnp} shows that STDM is NP-complete.  
\end{proof}

We now give NP-completeness results even when restricted to bipartite graphs or split graphs.
\noindent We formulate the SCDM for bipartite graphs as follows.\\[3pt]
\noindent \textit{Secure Connected Domination Problem for Bipartite Graphs} (\textit{SCDB})\\ [6pt]
\indent\textbf{Instance:} A connected bipartite graph $G = (V_1, V_2, E)$ and a positive integer $r$.\\
\indent\textbf{Question:} Does there exist a SCDS of size at most $ r $ in $ G $ ? 
\begin{thm}\label{scdbnp}
	SCDB is NP-complete.
\end{thm}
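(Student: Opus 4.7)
The plan is to verify SCDB $\in$ NP in the usual way (guess a set $S$ together with a defender assignment, check in polynomial time that $S$ is a CDS and that each proposed swap remains a CDS) and then to prove NP-hardness by a polynomial reduction from the Dominating Set problem restricted to connected bipartite graphs, which is a well-known NP-complete problem. The universal-vertex gadget used in Theorem \ref{scdmnp} is no longer admissible because a universal vertex destroys bipartiteness, so I will use two ``almost-universal'' vertices, one on each side of the bipartition, and anchor them with pendants so that Theorem \ref{pendant} forces them into every SCDS and prevents them from defending anything.

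Given an instance $(G,k)$ with $G=(V_1,V_2,E)$ a connected bipartite graph, I would construct $G^*$ by adjoining four new vertices $x_1,x_2,y_1,y_2$ as follows: place $x_1$ on the $V_1$-side with $N_{G^*}(x_1)=V_2\cup\{x_2,y_1\}$, place $x_2$ on the $V_2$-side with $N_{G^*}(x_2)=V_1\cup\{x_1,y_2\}$, and let $y_1$ (on the $V_2$-side) and $y_2$ (on the $V_1$-side) be pendants adjacent only to $x_1$ and $x_2$, respectively. Then $G^*$ is bipartite, connected, constructible in polynomial time, has pendant set $\{y_1,y_2\}$ and support set $\{x_1,x_2\}$, so by Theorem \ref{pendant} every SCDS of $G^*$ contains $F:=\{x_1,x_2,y_1,y_2\}$ and no vertex of $F$ can serve as a defender. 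Set $r:=k+4$.

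The equivalence to establish is that $G$ has a dominating set of size at most $k$ if and only if $G^*$ has an SCDS of size at most $r$. For the forward direction, given a DS $D$ of $G$, I would show $S=D\cup F$ is an SCDS: the induced subgraph is connected via the spine $y_1\text{-}x_1\text{-}x_2\text{-}y_2$ together with the fact that every vertex of $D\subseteq V_1\cup V_2$ is adjacent to $x_1$ or $x_2$; domination of $V(G^*)$ is already achieved by $\{x_1,x_2\}$; and for each $u\in V(G)\setminus D$ a $G$-neighbor of $u$ lying in $D$ can defend $u$, because the swap keeps $F$ intact and hence preserves the CDS property. For the converse, if $S$ is an SCDS of $G^*$ with $|S|\le r$, Theorem \ref{pendant} gives $F\subseteq S$, so $D:=S\setminus F$ has size at most $k$; for every $u\in V(G)\setminus D$ the secure condition supplies a defender $v\in S\cap N_{G^*}(u)$, Theorem \ref{pendant}(ii) rules out $v\in F$, and since the only edges from $u$ into $F$ end at $x_1$ or $x_2\notin D$, the defender $v\in D$ must be a $G$-neighbor of $u$; hence $D$ dominates $V(G)$.

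The main obstacle is designing the bipartite gadget so that three conditions hold simultaneously: Theorem \ref{pendant} must pin down a fixed vertex set, the pinned vertices must already dominate $V(G^*)$ so the connected-domination part of the SCDS condition is automatic, and the spine must survive every legal defensive swap so that no $V(G)$-vertex can be defended ``for free'' by the gadget. Once the symmetric two-sided construction above is chosen, the remaining verifications are routine, and the secure-connected condition in $G^*$ collapses exactly to ordinary domination in $G$.
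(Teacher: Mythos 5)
Your proof is correct, but it follows a genuinely different route from the paper's. The paper reduces from SCDM on general graphs (Theorem \ref{scdmnp}), building a bipartite graph on $2|V|+4$ vertices: a bipartite-double-cover-style copy $V\cup V'$ with edges $uv'$, $u'v$, $uu'$, two vertices $x,y$ adjacent to all of $V\cup\{p,q\}$, and two degree-two vertices $p,q$ whose only role is to force $x,y$ into every solution; the converse direction then has to ``fold'' a solution of $G^*$ back onto $V$ (the step producing $S''$, which the paper dismisses with ``clearly'' but which actually requires checking that replacing primed vertices by their unprimed partners preserves the secure connected condition). You instead reduce directly from Dominating Set on bipartite graphs, adding only four vertices: two cross-side almost-universal vertices anchored by pendants so that Theorem \ref{pendant} pins down $\{x_1,x_2,y_1,y_2\}$ and forbids them from defending, after which the secure connected condition in $G^*$ collapses to plain domination in $G$. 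Your verification of both directions is sound (the forward swap keeps the spine intact and $\{x_1,x_2\}$ dominates everything; the converse correctly excludes gadget vertices as defenders, so the defender of each $u\notin D$ must be a $G$-neighbour in $D$). What each approach buys: the paper's reduction is reused verbatim for the total-domination variant (Theorem \ref{stdmnp1}) and keeps a single gadget template across its hardness proofs, while yours is more economical, keeps $G$ as an induced subgraph so the backward direction is immediate, and--since it is a parameterized reduction directly from Domination on bipartite graphs with parameter shift $k\mapsto k+4$--it yields the W[2]-hardness corollary in one step rather than by composing two reductions as the paper implicitly does.
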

\begin{proof} It can be seen that SCDB is in NP. We transform an instance of SCDM problem to an instance of SCDB as follows. Given a graph $G$, we construct a graph $G^*(V_1, V_2, E)$ where $V_1(G^*)=V \cup  \lbrace p, q \rbrace$, $V_2(G^*)= V^\prime(G)$ $\cup$  $\lbrace x, y \rbrace$, here $V^\prime(G)$ is another copy of $V$ such that if $u$ and $v$ are two vertices in $V$ then the corresponding vertices in $V^\prime(G)$ are labeled as $u^\prime$ and $v^\prime$, and $E(G^*)$ consists of (i) edges $uv^\prime$ and $u^\prime v$ for each edge $uv \in E$; (ii) edges of the form $uu^\prime$ for each vertex $u \in V$; and (iii) edges of the form $ux$ and $uy$ for every vertex $u \in V_1(G^*)$. 	
	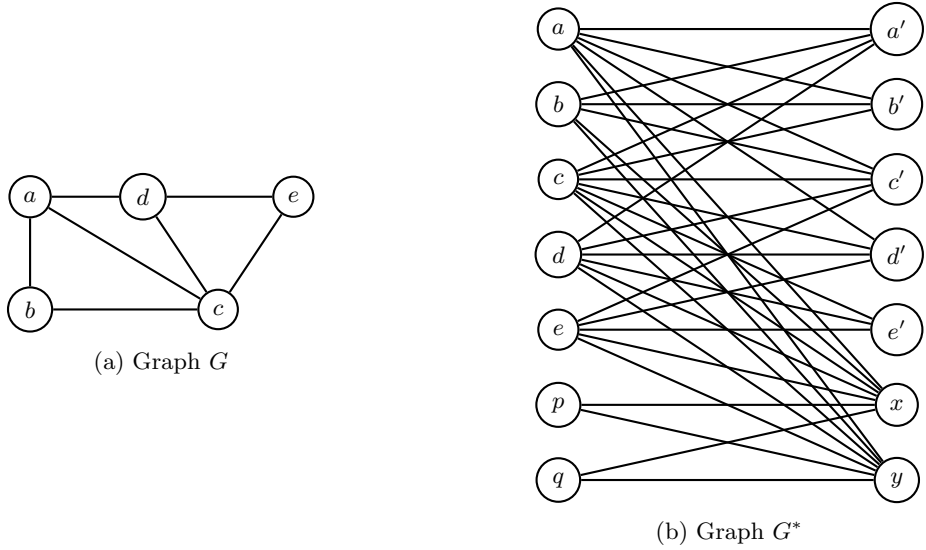
\begin{figure}[h]
		\centering
		\begin{subfigure}{.5\textwidth}
			\centering
				\begin{tikzpicture}[auto, node distance=1.5cm, every loop/.style={},
				thick,main node/.style={circle,draw,font=\sffamily\small\bfseries}]
				
				\node[main node] (1) {$a$};
				\node[main node] (2) [below of=1] {$b$};
				\node[main node,node distance=2.5cm] (3) [right of=2] {$c$};
				\node[main node] (4) [right of=1] {$d$};
				\node[main node,node distance=2.0cm] (5) [right of=4] {$e$};
								
				\path[every node/.style={font=\sffamily\small}]
				(1) edge node[left] {} (4)
				edge node[left] {} (2)
				edge node[left] {} (3)
			
				(2) 
								
				(3)  edge  node[left] {} (2)
				edge node[left] {} (4)
				edge node[left] {} (5)
				
				(4) 
				edge node[left] {} (5)

				;
				
				\end{tikzpicture}
				\caption{Graph $G$}

			\label{fig:sub1}
		\end{subfigure}%
		\hfill
		\begin{subfigure}{.5\textwidth}
			\centering
			\begin{tikzpicture}[auto, node distance=1.5cm, every loop/.style={},
			thick,main node/.style={circle,draw,font=\sffamily\small\bfseries}]
			
			\node[main node,node distance=1cm] (1) {$a$};
			\node[main node,node distance=1cm] (2) [below of=1] {$b$};
			\node[main node,node distance=1cm] (3) [below of=2] {$c$};
			\node[main node,node distance=1cm] (4) [below of=3] {$d$};
			\node[main node,node distance=1cm] (5) [below of=4] {$e$};
			\node[main node,node distance=1cm] (6) [below of=5] {$p$};
			\node[main node,node distance=1cm] (7) [below of=6] {$q$};
			
			\node[main node,node distance=4.5cm] (11) [right of=1]{$a^\prime$};
			\node[main node,node distance=1cm] (22) [below of=11] {$b^\prime$};
			\node[main node,node distance=1cm] (33) [below of=22] {$c^\prime$};
			\node[main node,node distance=1cm] (44) [below of=33] {$d^\prime$};
			\node[main node,node distance=1cm] (55) [below of=44] {$e^\prime$};
			
			\node[main node,node distance=1cm] (x) [below of=55] {$x$};
			\node[main node,node distance=1cm] (y) [below of=x] {$y$};
			
			\path[every node/.style={font=\sffamily\small}]
			(1) edge node[left] {} (11)
			edge node[left] {} (22)
			edge node[left] {} (33)
			edge node[left] {} (44)
			edge node[left] {} (x)
			edge node[left] {} (y)
			
			(2)	edge node[left] {} (11)
			edge node[left] {} (22)
			edge node[left] {} (33)
			edge node[left] {} (x)
			edge node[left] {} (y)
			
			(3)  edge  node[left] {} (11)
			edge node[left] {} (22)
			edge node[left] {} (33)
			edge node[left] {} (44)
			edge node[left] {} (55)
			edge node[left] {} (x)
			edge node[left] {} (y)
			
			(4) edge node[left] {} (11)
			edge node[left] {} (33)
			edge node[left] {} (44)
			edge node[left] {} (55)
			edge node[left] {} (x)
			edge node[left] {} (y)

			(5)edge node[left] {} (55)
			edge node[left] {} (44)
			edge node[left] {} (33)
			edge node[left] {} (x)
			edge node[left] {} (y)
			
			(6)edge node[left] {} (x)
			edge node[left] {} (y)

			(7)edge node[left] {} (x)
			edge node[left] {} (y)
			;
			\end{tikzpicture}
			\caption{Graph $G^*$}\label{fig:sub2}
		\end{subfigure}
		\caption{Construction of $G^*$ from $G$}
		\label{fig:test}
	\end{figure}
	Clearly $G^*$ is a bipartite graph and can be constructed from $G$ in polynomial time. \par
	Next, we show that $G$ has a SCDS of size at most $r$ if and only if $G^*$ has a SCDS of size at most $r+2$. If $ S $ is a SCDS of $ G $ with $ \vert S \vert \le r,$ then it can be easily verified that $ S^*= S \cup \{x,y\} $ is a SCDS of $ G^* $ with $ \vert S^* \vert \le r+2.$ \par
	Conversely, let $S^*$ be an SCDS of $G^*$ and  $\vert S^* \vert \le r+2$.  Since $ x $ and $ y $ are the only vertices in $ S^* $ which defend $ p $ and $ q $, it follows that at least one of them must be in $ S^*$. If $ x \in S^* $ and $ y \notin S^* $, then $ G^*[(S^*\setminus \{x\}) \cup \{p\}] $ is disconnected, which is a contradiction. Hence $ x,y \in S^*$. Let $ S^\prime= S^* \setminus \{x,y,p,q\} $ and $ S^{\prime\prime}=(S^\prime \cup \{v : v^\prime \in S^\prime \cap V^\prime(G)\}) \setminus \{v^\prime : v^\prime \in S^\prime \cap V^\prime(G)\}$. Clearly $ S^{\prime\prime} $ forms a SCDS of size at most $ r $ in $ G $. 
\end{proof}

\begin{thm}\label{stdmnp1}
	STDM is NP-complete for bipartite graphs.
\end{thm}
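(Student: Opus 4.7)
The plan is to reduce from STDM on arbitrary graphs, which is NP-complete by Theorem \ref{stdmnp}. Membership of STDM in NP (restricted to bipartite inputs) is immediate: given a candidate set, we can verify domination, totality, and the secure-defender property for each outside vertex in polynomial time. For the hardness reduction I would reuse the bipartite construction $G^*(V_1,V_2,E)$ from the proof of Theorem \ref{scdbnp}, where $V_1 = V \cup \{p,q\}$, $V_2 = V'(G) \cup \{x,y\}$, the edges $uv'$ and $u'v$ encode each edge $uv \in E$, each vertex $u$ is joined to its mirror $u'$, and $x,y$ are joined to every vertex of $V_1$. The target equivalence is: $G$ has a STDS of size at most $r$ iff $G^*$ has a STDS of size at most $r+2$.

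For the forward direction, given a STDS $S$ of $G$ with $|S|\le r$, I would verify that $S^* = S \cup \{x,y\}$ is a STDS of $G^*$. Total domination is easy: $x,y$ dominate all of $V_1(G^*)$, each of $x,y$ has every vertex of $S$ as a neighbor in $S^*$, and each primed vertex $v'\notin S^*$ is dominated by $v$ (if $v\in S$) or by some $G$-neighbor of $v$ in $S$. For the secure-defender property, $p$ and $q$ are defended by $x$ (with $y$ remaining to cover total-domination obligations), each unprimed $u\in V\setminus S$ can be defended by $x$ or $y$, and each primed $w'\notin S^*$ is defended by a neighbor using that $S$ is a STDS in $G$ (so $w\in S$ has a $G$-neighbor in $S$, and $w\notin S$ has a $G$-dominator in $S$).

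For the converse, given a STDS $S^*$ of $G^*$ with $|S^*|\le r+2$, I would first argue $\{x,y\}\subseteq S^*$. Since the only neighbors of $p$ (and of $q$) are $x$ and $y$, at least one of $x,y$ lies in $S^*$ for plain domination. If only $x\in S^*$, then defending $p$ through $x$ would require $(S^*\setminus\{x\})\cup\{p\}$ to be a TDS, but $q$ then has no neighbor in this set, a contradiction. Hence both $x,y\in S^*$. I would then form $S^{\prime\prime}$ from $S^*\setminus\{x,y,p,q\}$ by replacing each primed vertex by its unprimed counterpart (merging duplicates), obtaining a subset of $V$ of cardinality at most $r$, and claim $S^{\prime\prime}$ is a STDS of $G$.

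The main obstacle I anticipate is this last verification. Domination and total domination in $G$ project cleanly, since the $G^*$-edges between $V$ and $V'(G)$ mirror $G$-edges (the $uu'$ edges collapse harmlessly under the projection, and $x,y$ only contributed to handling $p,q$). The delicate step is the secure-defender condition: for $u\in V\setminus S^{\prime\prime}$, I need to exhibit a $G$-defender inside $S^{\prime\prime}$. In $G^*$, $u$ was defended by some $v^*\in S^*$, but $v^*$ may well be a primed vertex $w'$ or one of $x,y$. The plan is to use the mirror structure to pull this $G^*$-defender back to an unprimed vertex $w\in S^{\prime\prime}$ with $uw\in E$, and then verify that the projected set $(S^{\prime\prime}\setminus\{w\})\cup\{u\}$ inherits the TDS property from the corresponding set in $G^*$ (with $x,y$ still present to preserve total-domination obligations on both sides of the bipartition).
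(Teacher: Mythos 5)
Your proposal is essentially the paper's own proof: the paper simply asserts that the bipartite construction of Theorem \ref{scdbnp} (the mirrored copy $V'(G)$ together with $p,q,x,y$) also establishes NP-completeness of STDM on bipartite graphs, reducing from STDM on general graphs, which is exactly the route you take. The one step you flag as delicate --- that the projected set $S''$ is a secure total dominating set of $G$ --- is precisely the step the paper itself dispatches with ``clearly,'' so your expanded treatment of the forward direction and of forcing $x,y\in S^*$ only adds detail beyond what the paper records.
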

\begin{proof}
	 It is clear that STDM for bipartite graphs is in NP. The reduction given in the proof of Theorem \ref{scdbnp} shows that STDM is NP-complete for bipartite graphs.  
\end{proof}
\noindent Since the Domination problem is w[2]-complete for bipartite graphs \cite{vraman} and the reductions in Theorem \ref{scdbnp} and Theorem  \ref{stdmnp1} are in the function of the parameter $ l,$ the following two corollaries are immediate.
\begin{cor}
	SCDM is w[2]-hard in bipartite graphs.
\end{cor}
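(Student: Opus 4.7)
The plan is to derive the corollary by exhibiting a parameterized reduction from the classical Domination Problem to SCDM restricted to bipartite graphs, obtained by composing the two reductions already established. Since W[2]-hardness transfers along parameterized reductions and DM is W[2]-hard (indeed W[2]-complete), the conclusion will be immediate.

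Concretely, starting from an arbitrary instance $(G,k)$ of DM, the construction in the proof of Theorem \ref{scdmnp} produces, in polynomial time, an instance $(G^{*}, k+1)$ of SCDM. Feeding $(G^{*}, k+1)$ into the construction in the proof of Theorem \ref{scdbnp} then yields, again in polynomial time, a bipartite graph $\widetilde{G}$ together with the parameter $k+3$. The resulting instance $(\widetilde{G}, k+3)$ is an instance of SCDM on bipartite graphs, and the overall yes/no correctness of the composed transformation follows by chaining the two equivalences proved in those theorems.

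To certify that this is a valid parameterized (i.e.\ FPT-)reduction, the only additional point to check is that the new parameter is bounded above by a computable function of the original parameter. Here that function is simply $k \mapsto k+3$, which is linear in $k$ and, crucially, does not depend on $|V(G)|$. Therefore the composition is an FPT-reduction from DM to SCDM on bipartite graphs, and the W[2]-hardness of DM immediately yields W[2]-hardness of the latter.

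I do not anticipate any substantive obstacle: the structural content is already packaged in Theorems \ref{scdmnp} and \ref{scdbnp}, and all that remains is the parameter bookkeeping, namely verifying that the blow-up from $k$ to $k+3$ is a computable function of $k$ alone, which is immediate.
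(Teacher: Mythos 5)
Your proposal is correct and takes essentially the same route as the paper: the paper also obtains the corollary by observing that the NP-completeness reductions out of the Domination problem increase the parameter only by an additive constant, hence constitute FPT-reductions that transfer W[2]-hardness. The only cosmetic difference is that the paper's one-line justification cites W[2]-completeness of Domination on bipartite graphs and points at the reduction of Theorem \ref{scdbnp}, whereas you explicitly chain the reductions of Theorems \ref{scdmnp} and \ref{scdbnp} starting from general Domination and track the parameter $k \mapsto k+3$ --- which is, if anything, the more careful bookkeeping.
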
 
\begin{cor}
	STDM is w[2]-hard in bipartite graphs.
\end{cor}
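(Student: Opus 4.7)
The plan is to observe that the polynomial-time constructions leading up to Theorem \ref{stdmnp1} are in fact parameterized reductions, so that the W[2]-hardness of the Domination problem on bipartite graphs (cited from \cite{vraman}) transfers directly to STDM on bipartite graphs. To establish W[2]-hardness I need a parameterized reduction from a known W[2]-hard problem such that the output parameter is a computable function of the input parameter alone, and the transformation runs in polynomial time.

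The first step is to trace the parameter through the chain of reductions already built in the paper. Theorem \ref{stdmnp} (using the construction of Theorem \ref{scdmnp}) sends an instance $(G,k)$ of DM to an instance of STDM with bound $k+1$. Theorem \ref{stdmnp1} (using the bipartite gadget of Theorem \ref{scdbnp}) then sends an STDM instance with bound $m$ to an STDM instance on a bipartite graph with bound $m+2$. Composing the two, the DM bound $k$ becomes the STDM-on-bipartite bound $k+3$, which is an affine and hence computable function of $k$ alone.

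The second step is to verify that each link in the chain genuinely qualifies as a parameterized reduction: each construction is polynomial-time, each establishes an ``if and only if'' between yes-instances of bounded size, and — crucially — the bound on the output instance is produced without any hidden dependence on $|V(G)|$. All three properties are immediate from an inspection of the explicit constructions in Theorems \ref{scdmnp}, \ref{stdmnp}, \ref{scdbnp}, and \ref{stdmnp1}. Concatenating the two reductions therefore yields a parameterized reduction from DM (in particular, from DM restricted to bipartite inputs, which is W[2]-complete) to STDM on bipartite graphs, and the corollary follows.

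I do not anticipate any real obstacle, since the corollary is essentially a bookkeeping statement about the reductions already established; the only point requiring a moment of care is ruling out implicit size-dependence in the parameter transfer, which a direct look at the constructions settles.
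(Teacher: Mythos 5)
Your proposal is correct and follows essentially the same route as the paper: the authors likewise obtain the corollary by observing that the chain of NP-completeness reductions (DM to STDM, then STDM to STDM on bipartite graphs) transforms the parameter by an additive constant, hence constitutes a parameterized reduction from the W[2]-complete Domination problem. Your version merely makes explicit the parameter bookkeeping ($k \mapsto k+1 \mapsto k+3$) that the paper leaves implicit.
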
 

\noindent It has been shown that the DM and the TDM as NP-complete even when restricted to split graphs\cite{bersto}. 
\begin{thm} \label{scdomsplit}
SCDM is NP-complete for split graphs.
\end{thm}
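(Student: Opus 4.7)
The plan is to reduce from the Domination problem restricted to split graphs, shown to be NP-complete in \cite{bersto}, by re-using the construction of Theorem \ref{scdmnp} verbatim. Given a split graph $G = (V,E)$ with vertex partition $V = K \cup I$ into a clique and an independent set together with an integer $k$, I would form $G^{*}$ by adjoining a new vertex $x$ adjacent to every vertex of $V$, and set $l = k+1$. The first and essentially only genuinely new observation required is that $G^{*}$ is still a split graph: $K \cup \{x\}$ is a clique (since $x$ is universal in $G^{*}$) while $I$ remains an independent set, and the construction is clearly polynomial.

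For the forward direction, I would show that a dominating set $D$ of $G$ with $|D| \le k$ lifts to an SCDS $S = D \cup \{x\}$ of $G^{*}$ of size at most $l$. Connectedness of $G^{*}[S]$ and the fact that $S$ dominates $V(G^{*})$ are immediate from $x$ being universal. To verify the security property, given $u \in V \setminus D$ I would pick any $v \in D \cap N_G(u)$ (which exists because $D$ dominates $u$); the swapped set $(S \setminus \{v\}) \cup \{u\} = (D \setminus \{v\}) \cup \{u, x\}$ is connected through $x$ and dominates $V(G^{*})$ since $x$ is universal, hence is a CDS, so $v$ defends $u$.

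Conversely, given an SCDS $S^{*}$ of $G^{*}$ with $|S^{*}| \le l$, I would argue exactly as in Theorem \ref{scdmnp}. If $x \in S^{*}$, then by Proposition \ref{p1} $epn(x, S^{*}) = \emptyset$, forcing every $u \in V \setminus S^{*}$ to have a neighbor in $S^{*} \setminus \{x\}$, whence $S^{*} \setminus \{x\}$ is a dominating set of $G$ of size at most $k$. If $x \notin S^{*}$, then either $G$ is complete (the trivial case, with $\gamma(G)=1$) or $G^{*}$ is non-complete, in which case Proposition \ref{p2} gives that $S^{*} \setminus \{v\}$ is a dominating set of $G^{*}$ for any $v \in S^{*}$; since $x$ is absent, this is automatically a dominating set of $G$ of size at most $k$. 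The only ``obstacle'' is the split-preservation check, which is immediate; everything else is inherited from the proof of Theorem \ref{scdmnp}.
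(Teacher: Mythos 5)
Your reduction is correct, but it is not the one the paper uses. The paper also reduces from DM on split graphs, but its gadget adds \emph{two} vertices: a universal vertex $x$ and a pendant vertex $y$ attached to $x$ (so $Q^*=Q\cup\{x\}$, $I^*=I\cup\{y\}$), with budget $k+2$. The point of the extra pendant $y$ is that Theorem~\ref{pendant} then forces $x,y$ into every SCDS of $G^*$ and forbids them from defending anything, so the converse direction collapses to the one-line observation that $S^*\setminus\{x,y\}$ is a dominating set of $G$. You instead reuse the Theorem~\ref{scdmnp} gadget verbatim (one universal vertex, budget $k+1$), correctly check that splitness is preserved via $K\cup\{x\}$ and $I$, and then pay for the leaner construction in the converse with the two-case analysis ($x\in S^*$ via $epn(x,S^*)=\emptyset$ from Proposition~\ref{p1}; $x\notin S^*$ via Proposition~\ref{p2}, with the complete-graph degenerate case handled separately --- a case the paper's Theorem~\ref{scdmnp} itself glosses over and which you treat more carefully). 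Both arguments are sound; yours is more economical in the construction and tighter in the budget, while the paper's buys an essentially automatic converse (and a gadget that transfers unchanged to the secure \emph{total} domination variant in Theorem~\ref{stdomsplit}, since the pendant-vertex argument does the forcing there as well). No gap in your proof.
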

\begin{proof}
It is known that SCDM is a member of NP. We reduce DM for split graphs to SCDM for split graphs. Given a split graph $G$ whose vertex set is partitioned into a clique $Q$ and an independent set $I$, we construct a split graph $G^*$ with a clique $Q^*$ and an independent set $I^*$ as follows:\\
\indent $V(G^*) = V \cup \{ x ,y \}$, and \\
\indent $E(G^*) = E \cup \{ xu : u \in V \} \cup \{xy\}$.\\
\noindent Note that $G^*$ is a split graph, where $Q^* = Q \cup \{ x \}$ and $I^* = I \cup \{ y \}$ and $G^*$ can be constructed from $G$ in polynomial time. \par 
Now let $ S $ be a DS of $ G $ with $\vert S \vert \le k $. Then $ S^*=S \cup \{x,y\} $ is a SCDS of $ G^* $ with $ \vert S^* \vert \le k+2.$ Conversely, let$ S^* $ be a SCDS of $ G^* $ with $ \vert S^*\vert \le k+2.$ It follows from Proposition \ref{pendant} that $ x,y \in S^*.$ Clearly $S^\prime= S^* \setminus \{x,y\} $ is a DS of $ G $ with $ \vert S^\prime \vert \le k $. 
\end{proof}

\begin{thm}\label{stdomsplit}
STDM is NP-complete for split graphs.
\end{thm}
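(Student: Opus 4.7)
The plan is to reuse verbatim the reduction from \textit{Domination Problem on split graphs} employed in Theorem \ref{scdomsplit}: given a split graph $G$ with partition $V=Q\cup I$, construct $G^*$ by adding two new vertices $x,y$, making $x$ adjacent to every vertex of $V\cup\{y\}$, so that $Q^*=Q\cup\{x\}$ is a clique, $I^*=I\cup\{y\}$ is independent, and $y$ is a pendant with support $x$. Membership of STDM in NP is immediate, so the entire argument reduces to proving that $G$ has a DS of size at most $k$ if and only if $G^*$ has an STDS of size at most $k+2$.

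For the forward direction, I would take a DS $S$ of $G$ with $|S|\le k$ and check that $S^*=S\cup\{x,y\}$ is an STDS of $G^*$. Totality is clear: $x$ is universal in $Q^*$, so every vertex of $V\cup\{y\}$ has $x\in S^*$ as a neighbor, and $x$ itself has $y\in S^*$ as a neighbor. For security, any $u\in V\setminus S$ has a neighbor $v\in S$ in $G$; the swap $(S^*\setminus\{v\})\cup\{u\}$ still contains both $x$ and $y$, so the same neighbor-of-$x$ argument shows it is again a TDS.

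For the converse, let $S^*$ be an STDS of $G^*$ with $|S^*|\le k+2$. I would first force $x,y\in S^*$: since $y$'s only neighbor is $x$, total domination of $y$ already requires $x\in S^*$; if in addition $y\notin S^*$, then $y$ must be defended by some $v\in S^*\cap N(y)=\{x\}$, but $(S^*\setminus\{x\})\cup\{y\}$ leaves $y$ with no neighbor in the set, contradicting totality. Set $S=S^*\setminus\{x,y\}$, so $|S|\le k$. To see $S$ dominates $G$, take any $u\in V\setminus S^*$ and examine its defender $v\in S^*$: $v\ne y$ (not adjacent to $u$), and $v\ne x$ (else the swap $(S^*\setminus\{x\})\cup\{u\}$ again orphans $y$), hence $v\in S$ and $uv\in E(G)$.

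The only step that requires any real care is showing $y\in S^*$ and that $x$ cannot act as a defender; once these pendant-style observations are made, the rest is bookkeeping, and $G^*$ is split by the construction used in Theorem \ref{scdomsplit}, so NP-completeness on split graphs follows.
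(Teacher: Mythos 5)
Your proposal is correct and follows essentially the same route as the paper: the paper's proof simply reuses the reduction from Theorem \ref{scdomsplit} verbatim, which is exactly what you do. You additionally spell out the verification (totality of $S\cup\{x,y\}$, forcing $x,y\in S^*$ via the pendant $y$, and ruling out $x$ as a defender) that the paper leaves implicit, and those details check out.
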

\begin{proof}
	 It is clear that STDM for split graphs is in NP. The reduction given in the proof of Theorem \ref{scdomsplit} shows that STDM is NP-complete for split graphs.  
\end{proof}
\noindent Since the Domination problem is w[2]-complete for split graphs \cite{vraman} and the reductions in Theorem \ref{scdomsplit} and Theorem \ref{stdomsplit} are in the function of the parameter $ c,$ the following two corollaries are immediate.
\begin{cor}
	SCDM is w[2]-hard in split graphs.
\end{cor}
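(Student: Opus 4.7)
The plan is to leverage the reduction already established in Theorem~\ref{scdomsplit} and verify that it satisfies the additional requirement for a parameterized reduction, as opposed to only a classical polynomial-time reduction. Recall that to show W[2]-hardness of a parameterized problem $(\Pi, \kappa)$, it suffices to exhibit a parameterized reduction from a known W[2]-hard problem $(\Pi', \kappa')$: a mapping computable in polynomial time such that the new parameter $\kappa(f(x))$ is bounded by some computable function $g(\kappa'(x))$ of the original parameter. I take as the starting point the fact, cited from \cite{vraman}, that the Domination problem restricted to split graphs is W[2]-complete when parameterized by the solution size $k$.

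First, I would recall the construction of $G^*$ from $G$ given in the proof of Theorem~\ref{scdomsplit}: add two new vertices $x$ and $y$, make $x$ universal in $Q \cup \{x\}$ and adjacent to $y$, so that $Q^* = Q \cup \{x\}$ remains a clique and $I^* = I \cup \{y\}$ remains independent. This is a polynomial-time construction and outputs a split graph. Next, I would quote verbatim the equivalence proved there: $G$ has a dominating set of size at most $k$ if and only if $G^*$ has a secure connected dominating set of size at most $k+2$.

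The key observation — and the only substantive thing to check for the corollary — is that the parameter transformation $k \mapsto k+2$ is a computable function depending solely on $k$ (indeed, the simplest such function, $g(k)=k+2$), and does not depend on the size of the input graph. Hence the reduction of Theorem~\ref{scdomsplit} is in fact a parameterized reduction from DM on split graphs (parameter $k$) to SCDM on split graphs (parameter $l = k+2$).

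Since parameterized reductions compose and preserve W[2]-hardness, and DM on split graphs is W[2]-complete, this immediately yields that SCDM on split graphs is W[2]-hard. I do not anticipate any serious obstacle: the main point is purely to observe that the parameter blow-up in Theorem~\ref{scdomsplit} is an additive constant, which trivially is a function of the source parameter alone.
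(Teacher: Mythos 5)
Your proposal is correct and follows exactly the paper's own (very terse) argument: the authors likewise observe that the reduction of Theorem~\ref{scdomsplit} shifts the parameter only by an additive constant, so it is a parameterized reduction from the W[2]-complete Domination problem on split graphs. You have simply spelled out the details the paper leaves implicit.
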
 
\begin{cor}
	STDM is w[2]-hard in split graphs.
\end{cor}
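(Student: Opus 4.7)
The plan is to observe that the reduction already constructed in the proof of Theorem \ref{stdomsplit} (which in turn reuses the construction of Theorem \ref{scdomsplit}) is a parameterized reduction, and then invoke the known $W[2]$-completeness of the Domination problem on split graphs from \cite{vraman}.

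First I would recall the formal definition of a parameterized reduction: given two parameterized problems $(A,k)$ and $(B,k')$, an FPT-reduction is a mapping computable in time $f(k)\cdot |x|^{O(1)}$ that sends instances of $(A,k)$ to equivalent instances of $(B,k')$ where the new parameter satisfies $k' \le g(k)$ for some computable function $g$. Then I would verify that the reduction from $\mathrm{DM}$ on split graphs to $\mathrm{STDM}$ on split graphs given in Theorem \ref{stdomsplit} satisfies these requirements: the split graph $G^*$ is constructed from $G$ in polynomial time (so certainly within $f(k)\cdot |V(G)|^{O(1)}$), and the parameter transforms from $k$ to $k+2$, which is a computable function of $k$ alone. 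Hence the reduction is an FPT-reduction.

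Next I would invoke the fact, proved in \cite{vraman}, that the Domination problem parameterized by solution size is $W[2]$-complete on split graphs. Since $W[2]$-hardness is preserved under FPT-reductions, this immediately yields that $\mathrm{STDM}$ on split graphs is $W[2]$-hard.

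The ``hard part'' here is essentially bookkeeping rather than mathematics: one must confirm that the correctness argument from Theorem \ref{stdomsplit} gives a true \emph{if and only if} between the existence of a dominating set of size $k$ in $G$ and the existence of a secure total dominating set of size $k+2$ in $G^*$, and that nothing in the construction blows up the parameter beyond a function of $k$. Both facts are already contained in the proof of Theorem \ref{stdomsplit}, so the corollary follows with no further work.
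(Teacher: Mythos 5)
Your proposal is correct and follows exactly the route the paper takes: it observes that the reduction of Theorem \ref{stdomsplit} (built on the construction of Theorem \ref{scdomsplit}) maps the parameter $k$ to $k+2$ and is therefore an FPT-reduction from Domination on split graphs, whose $W[2]$-completeness from \cite{vraman} then transfers hardness to STDM. The only difference is that you spell out the definition of a parameterized reduction, which the paper leaves implicit.
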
 
\noindent In the next two theorems we prove that $ \gamma_{sc}(G) $ can be computed in linear time for block graphs and threshold graphs.\par
Let $ G=(V,E) $ be a connected graph. A vertex $ v $ is called a cut-vertex of $ G $ if $ G - v $ is a disconnected graph. A graph $ G $ with no cut-vertex is called a block. A block $ B $ of a graph is a maximal connected induced subgraph of $ G $ such that $ B $ has no cut-vertex. In block $ B $, vertices which are not cut vertices of $ G $ are called block vertices. A graph $ G $ is called a \textit{block graph} if all its blocks are cliques. 

\begin{definition}
	A graph $ G $ is called a block graph if all the blocks of $ G $ are cliques.
\end{definition}
\begin{thm}\label{path}
	Let $ G $ be a block graph having $ r $ blocks and $ k $ cut vertices. Then $ \gamma_{sc}(G) = k+r-r^\prime$, where $ r^\prime$ is the number of blocks such that all vertices of the block are cut vertices.
\end{thm}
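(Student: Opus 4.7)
The plan is to prove both inequalities $\gamma_{sc}(G)\le k+r-r'$ and $\gamma_{sc}(G)\ge k+r-r'$ separately and then observe that $k$, $r$, $r'$ can be extracted from a block decomposition in linear time. For the upper bound I would exhibit the explicit candidate $S$ consisting of all $k$ cut vertices of $G$ together with one arbitrarily chosen non-cut vertex $w_B$ from each of the $r-r'$ blocks $B$ that contain a non-cut vertex, giving $|S|=k+r-r'$. Because every block is a clique, the cut vertices in a single block form a clique in $G[S]$, and the block--cut tree lets me join any two cut vertices by an alternating sequence of cut vertices sharing common blocks; each $w_B$ hangs off a cut vertex of $B$ (the case $G=K_n$ being covered by Theorem \ref{com}), so $G[S]$ is connected, and domination is immediate. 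Security follows from the symmetry of any $u\in V\setminus S$ with $w_B$ inside the clique $B$: the swap $v=w_B$ produces $(S\setminus\{w_B\})\cup\{u\}$, which is still a connected dominating set since $u$ and $w_B$ play identical roles within $B$.

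For the lower bound, let $D$ be any SCDS. Two claims suffice: (i) every cut vertex lies in $D$, and (ii) every block containing at least one non-cut vertex contributes at least one non-cut vertex to $D$. Claim (i) is short: if a cut vertex $v$ were outside $D$, then $G[D]$ being connected would confine $D$ to a single component of $G-v$, leaving vertices in the other components with no $D$-neighbor at all. The harder step is (ii), which I would prove by contradiction using Proposition \ref{p1}. Suppose some block $B$ has non-cut vertices but $D\cap B$ consists only of cut vertices of $B$; let $w$ be any non-cut vertex of $B$. Then $N(w)\cap D$ equals the set of cut vertices of $B$, so Proposition \ref{p1}(i) already forces $|N(w)\cap D|\ge 2$ and hence $B$ to have at least two cut vertices. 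Proposition \ref{p1}(ii) then supplies a cut vertex $v$ of $B$ such that every component of $G[D\setminus\{v\}]$ meets $N(w)\subseteq B$. Using the block--cut tree, removing $v$ separates $B\setminus\{v\}$ from the subtrees hanging off each of the other blocks $B'$ containing $v$; any component of $G[D\setminus\{v\}]$ sitting inside such a subtree cannot reach $B$, so each such subtree must be entirely $D$-free. Since all cut vertices already belong to $D$, this forces every other block $B'$ at $v$ to be a leaf block whose unique cut vertex is $v$ and to contain no non-cut vertex of $D$. But then any non-cut vertex $w'$ of $B'$ satisfies $N(w')\cap D=\{v\}$, contradicting $epn(v,D)=\emptyset$ from Proposition \ref{p1}(i). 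This settles (ii).

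The main obstacle is the case of (ii) in which $B$ has at least two cut vertices, since Proposition \ref{p1}(i) alone is then not enough and one must exploit the tree-like block--cut structure to reach a contradiction as above. Combining (i) and (ii) gives $|D|\ge k+(r-r')$, matching the upper bound, so $\gamma_{sc}(G)=k+r-r'$. Finally, $k$, $r$, and $r'$ are read off in linear time from the standard DFS-based block decomposition of $G$, yielding a linear-time algorithm for $\gamma_{sc}(G)$ on block graphs.
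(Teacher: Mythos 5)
Your proof is correct and follows essentially the same route as the paper: the identical candidate set (all cut vertices together with one non-cut vertex from each block that has one) for the upper bound, and the same two lower-bound claims that every cut vertex lies in $D$ and that each block containing a non-cut vertex contributes one to $D$. The only real difference is that you carefully justify, via Proposition \ref{p1} and the block--cut tree, the step the paper merely asserts (``a cut-vertex cannot defend any other vertex''), which is added rigor rather than a divergence in approach.
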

\begin{proof}
	Let $ A $ denote the set of all cut vertices of $ G$. Let $ B_1, B_2, \ldots,B_{r^\prime},B_{r^\prime+1}, \ldots, B_r $ be the blocks of $ G $ where every vertex of $ B_i $ is a cut vertex of $ G $ if $ 1 \le i \le r^\prime.$ Let $ T=\{v_i : 1 \le i \le r-r^\prime$ and $ v_i$ is a non-cut vertex of $ B_{r^\prime+i}\}$. Let $ S=A \cup T.$ Since $ A$ contains all cut-vertices of $ G $, it follows that $ G[S] $ is connected. Also if $ v \in V \setminus S $, then $ v $ is not a cut-vertex. Now there exists a vertex $ u \in T $ such that $ uv \in E $ and $ (S \setminus \{v\}) \cup \{u\}$ is a CDS of $ G.$ Hence, $ S $ is a SCDS of $ G.$ Therefore, $ \gamma_{sc}(G)\le k+r-r^\prime.$   \par
	Now let $ D $ be any $\gamma_{sc} $-set of $ G.$ Since $ G[D] $ is connected, $ D \supseteq A.$ Further, a cut-vertex cannot defend any other vertex and hence $ D $ contains at least one non-cut vertex from each block $ B_i $ where $ r^\prime+1 \le i \le r.$ Hence $\gamma_{sc}(G)= \vert D \vert \ge \vert S \vert=k+r-r^\prime.$ Thus $\gamma_{sc}(G)= k+r-r^\prime.$
\end{proof}
\begin{cor}
	Let $ G $ be a block graph with $ r $ blocks and exactly one cut-vertex. Then $ \gamma_{sc}(G)=r+1.$ \\
\end{cor}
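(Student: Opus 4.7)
The plan is to obtain the corollary as a direct specialization of Theorem \ref{path}, which gives $\gamma_{sc}(G) = k + r - r'$ for any block graph with $k$ cut vertices, $r$ blocks, and $r'$ blocks consisting entirely of cut vertices. Since the hypothesis fixes $k = 1$, the only nontrivial point is to verify that $r' = 0$ in this setting; once that is established, the formula immediately collapses to $\gamma_{sc}(G) = 1 + r - 0 = r + 1$.

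To show $r' = 0$, I would argue that in any connected block graph $G$ with at least two vertices, every block is a clique on at least two vertices (a single-vertex block can arise only when $G \cong K_1$, which has no cut vertex at all, contradicting $k = 1$). Consequently, a block in which \emph{every} vertex is a cut vertex would contribute at least two elements to the set of cut vertices of $G$. Since $G$ has only one cut vertex by assumption, no such block exists, so $r' = 0$.

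Putting these observations together yields $\gamma_{sc}(G) = k + r - r' = 1 + r - 0 = r + 1$, which is the desired equality. The main (and in fact only) obstacle is the brief combinatorial check that $r' = 0$; once one notices that a block contributing to $r'$ forces at least two cut vertices in $G$, the conclusion is immediate from Theorem \ref{path}.
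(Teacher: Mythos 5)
Your proposal is correct and follows the same route the paper intends: the corollary is a direct specialization of Theorem \ref{path}, and your observation that a block contributing to $r'$ would contain at least two cut vertices (forcing $r'=0$ when $k=1$) is exactly the check needed to collapse the formula to $r+1$. The paper leaves this verification implicit, so your write-up simply makes the omitted step explicit.
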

\begin{cor}
	For any tree $ T$ with $ n $ vertices, $ \gamma_{sc}(T)=n.$
\end{cor}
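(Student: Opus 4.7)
The plan is to derive this corollary directly from Theorem \ref{path} by specializing it to trees. A tree is a block graph whose blocks are precisely its edges (each edge is a $K_2$, and trees have no larger complete subgraphs), so every block of a tree $T$ on $n$ vertices is a $K_2$ and there are exactly $r = n-1$ blocks. The cut vertices of $T$ are exactly its internal (non-leaf) vertices; I will denote their number by $k$.

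Next I would identify the quantity $r'$, the number of blocks all of whose vertices are cut vertices. Since every block is a single edge, $r'$ counts the edges of $T$ both of whose endpoints are internal vertices. The main step is the following observation: \emph{in a tree with $n \geq 3$, the internal vertices induce a connected subtree.} I would prove this by taking any two internal vertices $u, v$ and noting that the unique $u$--$v$ path in $T$ has every interior vertex of degree at least two in $T$, hence internal; so the induced subgraph on internal vertices contains the whole $u$--$v$ path and is therefore connected. Being an induced connected subgraph of a tree on $k$ vertices, it is itself a tree with $k - 1$ edges, which gives $r' = k - 1$.

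Plugging into the formula of Theorem \ref{path} yields
\[
\gamma_{sc}(T) \;=\; k + r - r' \;=\; k + (n-1) - (k-1) \;=\; n,
\]
as required. The case $n = 1, 2$ is either trivial ($K_1$) or handled separately as a complete graph ($K_2$); the statement is really meaningful for $n \geq 3$, and in fact is consistent with Theorem \ref{pendant}, which forces every leaf and support vertex into any SCDS — for a tree this already exhausts $V(T)$.

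I do not foresee a serious obstacle: once the structural fact that internal vertices of a tree induce a subtree is recorded, everything reduces to arithmetic inside Theorem \ref{path}. The only mild subtlety is remembering that $r'$ counts blocks consisting entirely of cut vertices rather than, say, edges incident to at least one cut vertex; getting this count right is what the subtree-of-internal-vertices argument is for.
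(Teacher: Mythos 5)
Your proposal is correct and follows essentially the same route as the paper: specialize Theorem \ref{path} to a tree with $r=n-1$ blocks and $k$ cut vertices and compute $r'$. The paper simply writes $r'=n-1-l$ (edges not incident to a leaf, $l$ the number of leaves), which equals your $k-1$ since $k=n-l$; your induced-subtree observation is just a slightly more explicit justification of that same count, and your remark about the $n=2$ case is a fair (minor) caveat the paper omits.
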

\begin{proof}
	Here $ r=n-1, r^\prime=n-1-l$ and $ k=n-l$ where $ l$ is the number of leaves in $ T.$ Therefore, $ \gamma_{sc}(T)=n.$
\end{proof}
\begin{cor}
	SCDM is linear time solvable for block graphs.
\end{cor}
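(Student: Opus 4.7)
The plan is to reduce the computation of $\gamma_{sc}(G)$ on a block graph $G$ to counting the three quantities appearing in Theorem \ref{path}, namely the number $r$ of blocks, the number $k$ of cut vertices, and the number $r'$ of blocks every vertex of which is a cut vertex, and then invoke the closed-form $\gamma_{sc}(G)=k+r-r'$.

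First I would run the standard DFS-based block--cut tree algorithm (Tarjan/Hopcroft) on $G$. In one depth-first traversal this produces, in $O(n+m)$ time, the list of all blocks $B_1,\ldots,B_r$ together with the set $A$ of cut vertices of $G$; from the output one reads off $r=|\{B_1,\ldots,B_r\}|$ and $k=|A|$ directly. Since $G$ is a block graph, each block is a clique, so its vertex set is already described by the algorithm as a list, and no additional processing is needed to ``recognize'' the blocks.

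Next, to compute $r'$, I would iterate once through the blocks and, for each $B_i$, check whether every vertex of $B_i$ lies in $A$; incrementing a counter whenever this holds gives $r'$. Using a boolean array $\mathrm{isCut}[\,\cdot\,]$ built from $A$, the test for a single block $B_i$ runs in $O(|V(B_i)|)$ time, and since $\sum_i |V(B_i)|\le n+k\le 2n$ in a block graph (each non-cut vertex appears in exactly one block and each cut vertex in at most as many blocks as its degree contributes), the whole sweep is $O(n+m)$.

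Finally I would output $k+r-r'$, which by Theorem \ref{path} equals $\gamma_{sc}(G)$; if a witnessing SCDS is required, the construction in the proof of that theorem (take all cut vertices, plus one non-cut vertex from every block that contains at least one non-cut vertex) can be emitted during the same sweep. The only point that needs a sentence of justification is that block-tree decomposition runs in linear time on general graphs, which is classical; there is no real obstacle, the corollary is essentially a direct algorithmic reading of Theorem \ref{path}.
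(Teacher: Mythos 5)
Your proposal is correct and follows exactly the paper's route: the paper's own proof is the one-line observation that the number of blocks and cut vertices of a block graph can be found in linear time, so $\gamma_{sc}(G)=k+r-r'$ from Theorem \ref{path} is immediately computable. You simply spell out the details (the DFS block--cut decomposition and the linear sweep to count $r'$) that the paper leaves implicit.
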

\begin{proof}
	Since number of blocks and number of cut-vertices of block graph can be determined in linear time, the result follows.
\end{proof}
\begin{definition}
	A graph $ G=(V, E) $ is called a \textit{threshold graph} if there is a real number $ t $ and a real number $ w(v) $ for every $ v \in V $ such that a set $ S \subseteq V $ is independent if and only if $ \sum_{v \in S}w(S) \le t $. 
\end{definition}
	Threshold graphs considered are assumed to be non-complete and connected. We use the following characterization of threshold graphs given in \cite{threshold1} to prove that secure connected domination number can be computed in linear time for threshold graphs.\\
\indent  A graph $ G $ is a \textit{threshold graph} if and only if it is a split graph and for split partition $(C, I) $ of $ V $, there is an ordering $ (x_1, x_2, \ldots, x_p) $ of vertices of $ C $ such that $ N[x_1] \subseteq N[x_2] \subseteq \ldots \subseteq N[x_p],$ and there is an ordering $ (y_1, y_2, \ldots, y_q) $ of the vertices of $ I $ such that $ N(y_1)\supseteq N(y_2) \supseteq \ldots N(y_q). $
\begin{thm}
	Let $ G $ be a connected threshold graph. Then $ \gamma_{sc}(G)=2+l$, where $ l $ is the number of pendant vertices.
\end{thm}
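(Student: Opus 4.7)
The plan is to use the chain characterization stated just above the theorem to read off the structure of $G$. From $N[x_1] \subseteq \cdots \subseteq N[x_p]$ together with the fact that $C$ is a dominating clique, $x_p$ must be a universal vertex of $G$. Combining this with $N(y_1) \supseteq \cdots \supseteq N(y_q)$ and the interlacing property of the two orderings, the neighborhood of each $y_j \in I$ inside $C$ is forced to be a suffix of $(x_1, \ldots, x_p)$; in particular, $y_j$ is a pendant of $G$ iff $N(y_j) = \{x_p\}$, so $L(G) \subseteq I$ and $x_p$ is the unique support vertex of $G$ whenever $l \geq 1$.

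For the lower bound $\gamma_{sc}(G) \geq l+2$, I would start with Theorem \ref{pendant}(i), which gives $L(G) \cup \{x_p\} \subseteq X$ for any SCDS $X$, so $|X| \geq l+1$. To upgrade this to $|X| \geq l+2$, consider any vertex $u \in V \setminus (L(G) \cup \{x_p\})$; by the secure condition, $u$ has a defender $v \in X \cap N(u)$. Pendants in $L(G)$ are adjacent only to $x_p \neq u$, so they cannot serve as $v$, while Theorem \ref{pendant}(ii) forbids $x_p$ itself from being a defender. Hence $X$ must contain at least one additional vertex. When $l=0$ there is no support vertex to appeal to, but the bound $\gamma_{sc}(G) \geq 2$ is then immediate from Theorem \ref{com} since $G$ is non-complete.

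For the upper bound, I would exhibit the candidate $X^* = L(G) \cup \{x_p, x_{p-1}\}$, of size $l+2$. Since $x_p$ is universal, $G[X^*]$ is a star centered at $x_p$, so $X^*$ is a CDS. To verify the secure property via Proposition \ref{p1}, I would check that (i) no vertex in $X^*$ has an external private neighbor, since by the suffix structure every $u \in V \setminus X^*$ is dominated by both $x_p$ and $x_{p-1}$; and (ii) for each such $u$, the vertex $x_{p-1}$ defends $u$, because $x_{p-1} \sim u$ (a clique edge if $u \in C$, the suffix property if $u \in I$ is non-pendant) and $G[X^* \setminus \{x_{p-1}\}]$ is still the star centered at $x_p$, whose unique component meets $N_G(u)$ through $x_p$.

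The main delicacy I anticipate is the suffix-structure fact that every non-pendant $y_j \in I$ is adjacent to $x_{p-1}$; this is a short but careful chase through both ordering conditions in the characterization. A secondary edge case is when $|C|=1$, so $G$ is a star $K_{1,q}$ and $x_{p-1}$ does not exist; here $\gamma_{sc}(G)=n=l+1$, so the statement is understood under the implicit assumption that $G$ is not a star (equivalently, $|C|\geq 2$ in the split partition).
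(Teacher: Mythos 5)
Your proof is correct and follows essentially the same route as the paper: the same candidate set $L(G)\cup\{x_p,x_{p-1}\}$ for the upper bound (the paper writes it as $\{x_p,x_{p-1}\}\cup\{v\in I: v\in N(x_p)\setminus N(x_{p-1})\}$, which coincides with yours by the suffix property you derive), and the same lower-bound argument via Theorem~\ref{pendant} followed by the observation that no vertex of a set of size $l+1$ could defend the remaining clique vertices. Your explicit treatment of the $l=0$ and $|C|=1$ (star) cases is in fact more careful than the paper's proof, which silently assumes $x_{p-1}$ exists and that all pendant vertices lie in $I$.
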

\begin{proof}
	Let $ S=\{x_p,x_{p-1}\} \cup \{v\in I : v\in N(x_p) \setminus N(x_{p-1}) \}$. Clearly $ G[S]$ is a star with center $ x_p.$ Also every vertex $ v \in V \setminus S $ is defended by $ x_{p-1} $ and $ G[(S \setminus \{x_{p-1}\}) \cup \{v\}] $ is connected. Thus, $ S $ is a SCDS of $ G.$ Hence $ \gamma_{sc}(G) \le 2+l.$ \par
	Let $ D $ be any $ \gamma_{sc}$-set of $ G$. It follows from Theorem \ref{pendant} that $  \vert D \vert \ge l+1.$ If $ \vert D \vert =l+1, $ then exactly one vertex of $ C $ say, $ u $ is a support vertex. Hence no vertex of $ C \setminus \{u\} $ is $ D $-defended, which is a contradiction. Hence $\gamma_{sc}(G)=\vert D \vert \ge 2+l.$ Thus $\gamma_{sc}(G)= 2+l.$
\end{proof}
\begin{thm}
	SCDM is linear time solvable for threshold graphs.
\end{thm}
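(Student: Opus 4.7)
The plan is to piggyback directly on the preceding theorem, which gives the closed-form expression $\gamma_{sc}(G) = 2 + l$, where $l$ is the number of pendant vertices of the threshold graph $G$. Since the value of $\gamma_{sc}(G)$ is completely determined by $l$, computing it (and in turn answering the SCDM decision question for any target $r$) reduces to counting pendant vertices, which is trivially linear.

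First, I would recall that threshold graphs can be recognized in $O(n+m)$ time, and together with this recognition one obtains the split partition $(C, I)$ and the nested neighborhood orderings $(x_1, \ldots, x_p)$ of $C$ and $(y_1, \ldots, y_q)$ of $I$ described in the characterization used earlier. However, for the present problem we do not even need the nested orderings: it suffices to scan the adjacency list once and count the vertices of degree exactly $1$, giving $l$ in $O(n+m)$ time. By the previous theorem, $\gamma_{sc}(G) = 2 + l$, and comparing this with the input integer in the SCDM instance decides the problem in linear time.

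If in addition one wants to output a witness SCDS and not merely the number, I would use the construction from the previous proof: take $x_p$ and $x_{p-1}$ (the two vertices of largest closed neighborhood in $C$, obtainable in linear time by scanning for maximum degrees in $C$) together with the set of vertices in $I$ adjacent to $x_p$ but not to $x_{p-1}$; this set has size exactly $l$, and the whole construction runs in one pass over the adjacency lists.

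The only potentially subtle point is justifying that all the ingredients are genuinely linear, in particular locating $x_p$ and $x_{p-1}$ and extracting $N(x_p)\setminus N(x_{p-1})$ without paying a quadratic cost; but the nested neighborhood property of threshold graphs guarantees that $x_{p-1}$ is simply the vertex of $C$ with second-largest degree, so both selections are finished in a single scan, and the set difference is read off directly from $N(x_p)$ after marking the neighbors of $x_{p-1}$. Thus the entire algorithm runs in $O(n+m)$ time, establishing that SCDM is linear-time solvable on threshold graphs.
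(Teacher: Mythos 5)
Your proposal is correct and follows essentially the same route as the paper: both rest entirely on the closed-form $\gamma_{sc}(G)=2+l$ from the preceding theorem, reducing SCDM to a trivial linear-time computation. Your version is in fact slightly more direct (counting degree-one vertices rather than invoking the linear-time computation of the nested clique ordering), and your extra discussion of extracting a witness SCDS is a harmless bonus.
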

\begin{proof}
		Since the ordering of the vertices of the clique in a threshold graph can be determined in linear time \cite{threshold1}, the result follows.
\end{proof}
\section{Conclusion}
\indent In this paper, it is shown that secure connected (total) domination problem is NP-complete even when restricted to bipartite graphs, or split graphs. Since split graphs form a proper subclass of chordal graphs, these problems are also NP-complete for chordal graphs. We have proved that secure connected domination problem is linear time solvable for block graphs and threshold graphs. It will be interesting to investigate the algorithmic complexity of secure connected (total) domination problem for subclasses of chordal and bipartite  graphs. 

\end{document}